\def\cA{{\mathcal{A}}}  \def\cC{{\mathcal{C}}} 
  \def\cK{{\mathcal{K}}} \def\cL{{\mathcal{L}}}
 \def\cN{{\mathcal{N}}}  
  \def\cW{{\mathcal{W}}} 
 \def\cZ{{\mathcal{Z}}}
  \def\bh{{\mathbf{h}}}  
   \def\bn{{\mathbf{n}}} 
\def\bp{{\mathbf{p}}}   \def\bs{{\mathbf{s}}} 
\def\bu{{\mathbf{u}}} \def\bv{{\mathbf{v}}}  \def\bx{{\mathbf{x}}} \def\by{{\mathbf{y}}}
\def\bz{{\mathbf{z}}}
\def\bA{{\mathbf{A}}} \def\bB{{\mathbf{B}}}   
 \def\bG{{\mathbf{G}}} \def\bH{{\mathbf{H}}} \def\bI{{\mathbf{I}}} 
   \def\bN{{\mathbf{N}}} 
\def\bP{{\mathbf{P}}}    \def\bT{{\mathbf{T}}}
  \def\bW{{\mathbf{W}}}  \def\bY{{\mathbf{Y}}}
\def\bZ{{\mathbf{Z}}}
\def\argmin{\mathop{\mathrm{argmin}}}
\def\argmax{\mathop{\mathrm{argmax}}}
\def\det{\mathop{\mathrm{det}}}
\def\tr{\mathop{\mathrm{tr}}}
\def\dim{\mathop{\mathrm{dim}}}
\def\rank{\mathop{\mathrm{rank}}}
\def\span{\mathop{\mathrm{span}}}
     \def\d4{\!\!\!\!}              \def\ep{\epsilon}
                  \def\gam{\gamma}     
\def\bSig{\mathbf{\Sigma}} \def\bsh{\backslash} \def\bGam{\mathbf{\Gamma}}
 \def\bPi{\mathbf{\Pi}}
  \def\R{{\mathbb{R}}} \def\C{{\mathbb{C}}} 
\def\bzero{\mathbf{0}}
\def\lp{\left(}     \def\rp{\right)}    \def\ls{\left\{}    \def\rs{\right\}}    \def\lS{ \left[ }
\def\rS{ \right] }  \def\la{\left|}     \def\ra{\right|}    \def\lA{\left\|}     \def\rA{\right\|}
  \def\barm{\bar{m}}
    \def\bbA{\bar{\bA}}
\def\bbB{\bar{\bB}}
      \def\tbh{\tilde{\bh}}  
   \def\tbH{\widetilde{\bH}}
\def\tby{\tilde{\by}} \def\tbz{\tilde{\bz}}  \def\tbs{\tilde{\bs}} \def\tbA{\widetilde{\bA}}
\def\tbB{\widetilde{\bB}} \def\tbN{\widetilde{\bN}}  \def\tbp{\tilde{\bp}}
\def\tR{\tilde{R}}
   \def\hbB{\widehat{\bB}} \def\hu{\hat{u}}
\def\hk{\hat{k}}
  \def\-{\! - \!}  \def\+{\! + \!}  \def\={\! = \!}  \def\>{\! > \!}
\newtheorem{theorem}{Theorem}
\newtheorem{lemma}{Lemma}
\newtheorem{corollary}{Corollary}
\newtheorem{remark}{Remark}
\newtheorem{definition}{Definition}
\newcommand{\bet}{\begin{table}}
\newcommand{\eet}{\end{table}}
\newcommand{\btt}{\begin{tabular}}
\newcommand{\ett}{\end{tabular}}
\newcommand{\bec}{\begin{center}}
\newcommand{\eec}{\end{center}}
\newcommand{\bef}{\begin{figure}}
\newcommand{\eef}{\end{figure}}
\newcommand{\beq}{\begin{eqnarray}}
\newcommand{\eeq}{\end{eqnarray}}
\newcommand{\bea}{\begin{array}}
\newcommand{\eea}{\end{array}}
\newenvironment{proof}[1][Proof]{\begin{trivlist}
\item[\hskip \labelsep {\bfseries #1}]}{\end{trivlist}}
\newcommand{\qed}{\nobreak \ifvmode \relax \else
\ifdim\lastskip<1.5em \hskip-\lastskip
\hskip1.5em plus0em minus0.5em \fi \nobreak
\vrule height0.5em width0.6em depth0.1em\fi}
\begin{document}

\title{On the Spatial Degrees of Freedom of Multicell and Multiuser MIMO Channels}

\author{Taejoon~Kim,~\IEEEmembership{Student~Member,~IEEE,}
        David~J.~Love,~\IEEEmembership{Senior~Member,~IEEE,}
        and~Bruno~Clerckx,~\IEEEmembership{Member,~IEEE,}
\thanks{T. Kim and D. J. Love are with the School
of Electrical and Computer Engineering, Purdue University, West
Lafayette,
IN, 47906 USA (e-mail: kim487@ecn.purdue.edu, djlove@ecn.purdue.edu).}
\thanks{B. Clerckx is with Samsung Advanced Institute of Technology,
Samsung Electronics, Yongin-Si, Gyeonggi-Do, 446-712 Korea
(e-mail: bruno.clerckx@samsung.com).}
\thanks{This work was supported in part by Samsung Electronics.}}

\maketitle
\begin{abstract}
We study the converse and achievability for the degrees of freedom
of the multicellular multiple-input multiple-output (MIMO) multiple access channel (MAC) with
constant channel coefficients.
We assume $L>1$ homogeneous cells with $K\geq 1$ users per cell where the users have $M$ antennas and
the base stations are equipped with $N$ antennas.
The degrees of freedom outer bound for this $L$-cell and $K$-user MIMO MAC is formulated.
The characterized outer bound uses insight from a limit on the total degrees of freedom for the $L$-cell heterogeneous MIMO network.
We also show through an example that a scheme selecting a transmitter and performing partial message sharing
outperforms a multiple distributed transmission strategy in terms of the total degrees of freedom.
Simple linear schemes attaining the outer bound (i.e., those achieving the optimal degrees of freedom) are explores for a few cases.
The conditions for the required spatial dimensions attaining the optimal degrees of freedom
are characterized in terms of $K$, $L$, and the number of transmit streams.
The optimal degrees of freedom for the two-cell MIMO MAC are examined by using transmit zero forcing and null space interference alignment
and subsequently, simple receive zero forcing is shown to provide the optimal degrees of freedom for $L>1$.
By the uplink and downlink duality, the degrees of freedom results in this paper are also applicable to the downlink.
In the downlink scenario, we study the degrees of freedom of $L$-cell MIMO interference channel exploring multiuser diversity.
Strong convergence modes of the instantaneous degrees of freedom as the number of users  increases are characterized.
\end{abstract}
\IEEEpeerreviewmaketitle
\setlength{\baselineskip}{21pt}

\section{Introduction}

Over the past few years, a significant amount of research has gone into making various techniques for
enhancing spectrum reusability reality. Spatial  techniques such as multiple-input
multiple-output (MIMO) wireless systems have been widely studied to improve the spectrum reusability.
Recently, the scope of spatial transmission has been extended to MIMO network wireless systems such as the
interference network, relay network, and multicellular network.
Network MIMO systems are now an emphasis of IMT-Advanced and beyond systems.
In these networks, out-of-cell (or cross cell) interference is a major drawback.
Before network MIMO can be deployed and used to its full potential, there are a large number of challenging issues.
Many of these deal with interference management and joint processing between nodes to suppress out-of-cell
interference (e.g., see the references in \cite{Gesbert1}).

\subsection{Overview}

Understanding the information-theoretic capacity of general network MIMO is still
challenging even under full cooperation assumptions.
Alternatively, there are various approaches to approximate the capacity in the high SNR regime (some of which can
be practically achieved in small cell scenarios \cite{Gesbert1}) by analyzing the number of resolvable
interference-free signal dimensions in terms of the degrees of freedoms
of the network. Initial works include the degrees of freedom and/or capacity region characterization for the MIMO
multiple access channel (MAC) \cite{Tse1} and MIMO broadcast channel \cite{vishwanath,viswanath,Yu, weingarten}.
While the general capacity region of the interference channel is not known,
there are some known capacity results with \emph{very strong} \cite{Carleial}  and \emph{strong} \cite{Han, Sato} interference.
The capacity  outer bounds \cite{Carleial1, Kramer} and degrees of freedom outer bounds \cite{Host, Host1} for the multiple nodes
interference channel with single antenna nodes have been characterized.
Recently, the degrees of freedom have been studied for the two node MIMO X channel \cite{Maddah, Jafar2} and
the two user MIMO interference channel \cite{Jafar1}. The key innovation used to prove the inner bound on the degrees
of freedom is interference alignment \cite{Jafar2, Jafar1}.

Interference alignment aims to allow coordinated transmission and reception  in order to
increase the total degrees of freedom of the network.
Interference alignment generates overlapping user signal spaces occupied by undesired interference
while keeping the desired signal spaces distinct.
When an achievable scheme achieves the degrees of freedom of the converse, we say that the scheme attains the \textit{optimal degrees of freedom}.

The fundamental idea of interference alignment in \cite{Jafar2, Jafar1} is extended to the multiple node X channel in \cite{Cadambe1},
$K$-user interference channel in \cite{Cadambe2, Gou}, and more general cellular networks in \cite{Suh}
under a time or frequency varying channel assumption.
For the X channel with single antenna users, interference alignment achieves the optimal degrees of freedom
for the $K$ by $L\=2$ (or $K\=2$ by $L$) X channel with finite symbol extension, but
for $K>2$ and $L>2$, it requires infinite symbol extension \cite{Cadambe1}.
The $K$-user interference channel with single antenna nodes \cite{Cadambe2} and multiple antenna nodes \cite{Gou}
also needs infinite symbol extension.
Various aspects of interference alignment for cellular networks are investigated in \cite{Suh} including
the effect of a multi-path channel and channel with propagation delay.
The work in \cite{Suh} shows that a single degree of freedom can be achieved per user as the number of users
grows large with symbol extension.

In the case of constant channel coefficients, the spatial degrees of freedom have mainly been
investigated. For the two by two MIMO X channel, the exact optimal degrees of freedom
of $\frac{4}{3}M$ is achievable when each node has $M>1$ antennas \cite{Jafar2, Maddah}.
The optimal degrees of freedom of the two user MIMO interference channel is shown to be
$\min\lp 2M,2N, \max(M,N) \rp$ in \cite{Jafar1}, where $M$ and $N$ denote the number antennas at the transmitter and
receiver, respectively.
Remarkably, simple zero forcing is sufficient to provide the optimal degrees of freedom \cite{Jafar2, Jafar1}.
Interference alignment in a three-user interference channel with $M=N$ antennas at each node yields
the optimal degrees of freedom of $\frac{3M}{2}$ when $M$ is even (when $M$ is odd a two
symbol extension is required to achieve $\frac{3M}{2}$) \cite{Cadambe2}.
Compared to the prototypical examples of the two-user MIMO interference channel or two by two MIMO X channel,
the general characterization of the optimal degrees of freedom for the multicell multiuser MIMO networks
(that works for an arbitrary numbers of users and cells) with constant channel coefficients is still an open problem.
When studying the achievable scheme with constants channel coefficients, the number of required $M$ and $N$
must be determined as a function of the number of cells ($L$) and users ($K$) or vice versa.
Thus, taking into consideration all of these dependencies often makes the characterization overconstrained.
Recently, an achievable scheme where each user obtains one degree of freedom for the two cell and $K$-user MIMO
network with constant channel coefficients is proposed for $N=M=K+1$ in \cite{Suh1}.
In an $L$-cell and $K$-user MIMO network, a necessary zero interference condition on $M$ and $N$
(as a function of $K$ and $L$) to provide one interference free dimension to each of users
is investigated in \cite{Honig1}.

The conventional interference alignments and other linear schemes in \cite{Jafar2,Jafar1,Cadambe1,Cadambe2,Gou,Suh}
require global notion of CSI at all nodes, and
the optimal degrees of freedom is particularly attained by extending signals over large space/time/frequency dimensions.
To overcome these challenges, efficient interference alignment schemes that only utilize local CSI feedback are
considered in \cite{Gomadam,Suh1}.
An efficient way to provide additional degrees of freedom gain without a global notion of CSI and, at the same time,
with a reduced amount of feedback is to exploit multiuser diversity as in \cite{viswanath1,Sharif}.
The basic notion of the multiuser diversity with multiple antennas in \cite{viswanath1,Sharif} has been
recently extended to interference networks, namely through opportunistic interference alignment,
such as for the case of a cognitive network \cite{Perlaza}, cellular uplink \cite{Bangchul},
and cellular downlink \cite{Tang, Gesbert2, Junghoon}.
The common idea is to schedule users (or dimensions in \cite{Perlaza}) so that the interference
caused by the selected users to the other receivers are aligned or minimized with the aid of power
allocation \cite{Perlaza, Gesbert2} and opportunistic transmit or receive filter design
\cite{Perlaza, Bangchul,  Tang,  Junghoon}. The performance of the multiuser diversity is
evaluated or analyzed in terms of the average throughput \cite{Perlaza, Tang, Gesbert2} and
average degrees of freedom \cite{Bangchul, Junghoon}.

\subsection{Contributions}
First, a simple characterization of the optimal degrees of freedom with constant channel coefficient
for the multicell MIMO MAC is provided. 
Then, a scenario when the downlink system exploits the multiuser diversity is considered and 
the degrees of freedom by employing user scheduling is characterized.  

In the uplink, we assume $L$ homogeneous cells with $K$ users per cell.
We do not consider time or frequency domain extensions with a time or frequency varying channel assumption.
Alternatively, spatial resources are utilized with constant channel coefficients.
Although our focus is on the scenario where the transmitter and receiver have $M$ and $N$ antennas, we show a spatial degrees of
freedom outer bound for the $L$-cell and $K$-user MIMO MAC that includes the case when each node has a different number of antennas.
For the two-cell case, two linear schemes that achieve the degrees of freedom outer bound are characterized.
The first scheme is a simple transmit zero forcing with $M=K\beta + \beta$ and $N=K\beta$, and the second
one is a null space interference alignment with $M=K\beta$ and $N=K\beta+\beta$, where $\beta>0$ is a positive integer.
For $L>1$ (including the two-cell case), it is verified that receive zero forcing with $M=\beta$ and $N=KL\beta$ precisely
achieves the optimal degrees of freedom for $K\geq 1$.

The main ingredients of the degrees of freedom outer bound, analogous to \cite{Host,Cadambe1,Cadambe2,Gou}, are to split whole messages into small subsets
so that the outer bound can tractably be formulated for each of message subsets.
We define the message subset for the $L$-cell heterogeneous networks where
$L-1$ cells form an $L-1$-user MIMO interference channel and a single cell forms a $K$-user MIMO MAC.
We also investigate through an example that selecting a subset of transmitters and allowing them to use partial
message sharing (through perfect links) achieves a higher degrees of freedom than distributed MIMO transmission.

Null space interference alignment for the two-cell case is developed for the uplink scenario with $N>M$
to show the achievability of the converse.
It relies on each base station using a carefully chosen null space plane.
The null space planes are designed to project the out-of-cell interference to a lower dimensional subspace than its original
dimension so that the null space plane can jointly mitigate the degrees of freedom loss coming from
the out-of-cell interference.
The dimensions of the interference free signal at each base station
after projection depend on the ``size" of the overlapped out-of-cell interference null space, which is referred
to as the \emph{geometric multiplicity} of the out-of-cell interference null space (the definition will be clearer
in Section \ref{section5}). We generalize the null space interference alignment framework for various kinds
of antenna dimensions. Though it does not necessarily achieve the optimal degrees of freedom,
it resolves $\beta>0$ interference free dimensions per user.
Notice that by the uplink and downlink duality the degrees of freedom results obtained for the uplink are also applicable to the downlink.

Next, we study the degrees of freedom of the $L$-cell downlink interference channel by exploiting multiuser diversity.
One of the key aspects for the interference alignment in \cite{Cadambe1, Cadambe2, Gou, Suh} is in its
almost sure (a.s.) convergence argument on the instantaneous degrees of freedom with infinite
symbol extension across time and frequency.
In line with the convergence argument made in interference alignments, we show that this strong
convergence argument on the instantaneous degrees of freedom still holds when utilizing many users
in the network.
We quantify the additional degrees of freedom achievable through the user scheduling
where the user scheduling only uses the local CSI.
This exhibits clear comparison on the instantaneous degrees of freedom between
the multiuser diversity system and interference alignment in \cite{Cadambe1, Cadambe2, Gou, Suh}.
We show in particular that if the number of candidate users that participate in scheduling
in a cell increases faster than linearly with SNR, the instantaneous degrees of freedom
converges to $L$ in both mean-square (m.s.) sense and almost sure (a.s.) sense for the $L$-cell
downlink MIMO interference channel with $M=1$ and $N=L-1$.

The rest of the paper is outlined as follows.
Section \ref{section2} describes the system model.
In Section \ref{section3}, the degrees of freedom outer bound for $L$-cell and $K$-user MIMO MAC is formulated.
The conditions for the optimal degrees of freedom are characterized in Section \ref{section4}.
In Section \ref{section5}, general frameworks for the null space interference alignment for various kinds of spatial dimension
conditions are investigated.
Section \ref{section6} discusses the instantaneous degrees of freedom with multiuser diversity for the $L$-cell
downlink MIMO interference channel.
The paper is concluded in Section \ref{section_conlusions}.

\section{System Model} \label{section2}

We first define the uplink channel model.
The downlink channel model is simply described by the uplink and downlink duality.
\subsection{Uplink Channel Model}

Consider a network that consists of $L$ homogeneous cells. In each cell, there are $K\geq 1$ users
and one base station, where each user has $M\geq 1$ antennas and the base station is
equipped with $N\geq 1$ antennas. We introduce an index $\ell k$ to correspond to user $k$ in cell $\ell$ for
$\ell\in\cL$ and $k\in\cK$ where $\cL=\{1,\ldots, L\}$ and $\cK=\{1,\ldots, K\}$, respectively.
For instance, a $3$-cell MIMO MAC is shown in Fig. \ref{Fig1} where each cell consists of $2$ users (i.e., $L=3$ and $K=2$).
Note that though our focus, in this paper, is on $L$ homogeneous cells where the transmitter and receiver have $M$ and $N$ antennas,
respectively, we generalize the degrees of freedom outer bound when user $\ell k$ has
$M_{\ell k}$ antenna and base station $\ell$ has $N_{\ell}$ antennas in Section \ref{subsection3.1}.

The channel input-output relation at the $t$th discrete time slot is described as
\beq
\by_{m}(t)= \sum_{\ell=1}^{L}\sum_{k=1}^{K}\bH_{m,\ell k}\bx_{\ell k}(t)+\bz_{m}(t), \  m \in \cL \label{channel_model1}
\eeq
where $\by_{m}(t)\in\C^{N\times 1}$ and $\bz_{m}(t)\in \C^{N\times 1}$ denote the received signal vector and additive noise vector
at the base station $m$, respectively.
Each entry of $\bz_m(t)$ is independent and identically distributed (i.i.d.) with
$\cC\cN(0,1)$.
The vector $\bx_{\ell k}(t)\in\C^{M\times 1}$ in \eqref{channel_model1} represents the user $\ell k$'s transmit vector at $t$th channel use.
The channel input is subject to an individual power constraint
\beq
E\lS \lA\bx_{\ell k}(t)\rA^2 \rS=\tr\lp E\lS \bx_{\ell k}(t)\bx_{\ell k}^*(t) \rS \rp \leq \rho, \  k\in\cK,  \ell\in\cL \label{power_const1}
\eeq
where $\rho$ represents SNR.
The matrix $\bH_{m,\ell k}\in\C^{N\times M}$ in \eqref{channel_model1} denotes the channel with constant coefficients from
user $\ell k$ to base station $m$.
Moreover, $\ls \bH_{m,mk}\rs_{k\in\cK}$ represent the desired data channels at base station $m$
while the matrices $\ls \bH_{m,\ell k} \rs_{\ell \in\cL\bsh m, k\in\cK}$ carry out-of-cell interference to  base station $m$.
All the channel matrices are sampled from continuous distributions, and each entry of $\bH_{m,\ell k}$ is i.i.d. (i.e., we
basically assume a rich scattering environment).
This channel model almost surely ensures all channel matrices have full rank, i.e.,
$\footnote{Throughout the paper, the $\rank(\bA)$ for $\bA\in\C^{N \times M}$ extracts a dimension of the range space of
$\bA$, i.e., $\rank(\bA)=\dim(ran(\bA))$, where the range space is defined as
$ran(\bA)=\{ \by\in\C^{N\times 1}: \by=\bA\bx, \bx\in\C^{M\times 1} \}$ and $\dim(\cA)$ extracts the number
of basis of the subspace $\cA$.
Null space of $\bA$ is defined as $null(\bA)=\{ \bx\in\C^{M\times 1}: \bA\bx=\bzero \}$.}
\rank(\bH_{m,\ell k})=\min(M,N)$ for $m,\ell\in\cL$ and $k\in\cK$.
The channel gains from different users are mutually independent.
This channel condition where all channel matrices with i.i.d. are full rank is referred to as \emph{nondegenerate} in this paper.

Define $W_{\ell k}(\rho)$ as a message from user $\ell k$ to the destined base station $\ell$ at SNR $\rho$.
The message $W_{\ell k}(\rho)$ is uniformly distributed in a $(n,2^{nR_{\ell k}(\rho)})$ codebook
$\cZ(\rho)$$=$$\{ \zeta_{1}(\rho), \ldots, \zeta_{2^{nR_{\ell k}(\rho)}}(\rho) \}$, and messages at different
users are independent of each other. In order to approach the capacity, the data rate of the
coding scheme increases with respect to (w.r.t) $\rho$. This includes a coding \emph{scheme} where the codebook is
chosen from a sequence of codebooks $\ls \cW(\rho) \rs$ for each level of $\rho$.
The message $W_{\ell k}(\rho)$ is mapped to $\bx_{\ell k}(t)$ in \eqref{channel_model1} over $n$ channel uses.
Then, the information transfer rate $R_{\ell k}(\rho)$ of message $W_{\ell k}(\rho)$ is
said to be achievable if the probability of decoding error can be made arbitrarily small by choosing
an appropriate channel block length $n$.
The capacity region $\cC(\rho)$ is the set of all achievable rate tuples $\{R_{\ell k}(\rho)\}_{\ell \in\cL, k\in\cK}$.

\subsection{Degrees of Freedom} \label{subsection2.2}

We define the spatial degrees of freedom of the multicell MIMO MAC as
\beq
\bSig_{d} = \lim_{\rho\rightarrow\infty}
                                        \sum_{\{ R_{\ell k}(\rho) \}_{\ell\in\cL,k\in\cK}\in\cC(\rho)} \frac{R_{\ell k}(\rho)}{\log(\rho)}. \label{definition_dof}
\eeq
A network has $\bSig_{d}$ degrees of freedom if the sum capacity is expressed as $\bSig_{d}\log(\rho)\+o(\log(\rho))$.
This implies that the degrees of freedom $\bSig_d$ is equivalent to the total number of interference free signal
dimensions (i.e., the number of  effective single-input single-output (SISO) data streams that can be supported).

The degrees of freedom measure $\bSig_d$ in \eqref{definition_dof} ignores any fixed (or vanishing) quantities
in the achievable sum rate expression as $\rho$ increases.
Notice that the quantity $\bSig_d$ in \eqref{definition_dof} is characterized as a convergence of random variables
$\ls \frac{R_{\ell k}(\rho)}{\log(\rho)} \rs_{\ell\in\cL, k\in\cK}$ as $\rho\rightarrow\infty$.
The degrees of freedom results in \cite{Jafar2,Jafar1,Cadambe1,Cadambe2,Gou,Suh} show this convergence as almost sure (a.s.) sense.
When we refer the degrees of freedom in Section \ref{section3}, \ref{section4}, and \ref{section5}, that implies $\bSig_d$
characterized with instantaneous achievable rates $\{ R_{\ell k}(\rho) \}_{\ell\in\cL,k\in\cK}$.
While, when we explore the multiuser diversity in Section \ref{section6},
we need to distinguish between the \emph{instantaneous} degrees of freedom
and the \emph{average} degrees of freedom in order to capture the detailed difference in user scaling
laws. Notice that the former includes the mode of the convergence in random sequences
$\ls \frac{R_{\ell k}(\rho)}{\log(\rho)} \rs_{\ell\in\cL, k\in\cK}$ as $\rho, K \rightarrow \infty$,
while the later does not include detailed convergence argument.

In what follows, we will omit the $\rho$ attached to $W_{\ell k}(\rho)$ and $R_{\ell k}(\rho)$.
In addition, with an abuse of notation, $\by_m(t)$, $\bz_m(t)$, and $\bx_{\ell k}(t)$ in \eqref{channel_model1}
are simplified to $\by_m$, $\bz_m$, and $\bx_{\ell k}$.

\subsection{Downlink Channel Model}\label{subsection2.3}
The uplink scenario is converted to the downlink scenario by changing the role of the transmitter and receiver
and defining the reciprocal channel for the downlink as shown in \cite{Gomadam, Suh, Honig1} (i.e., uplink and downlink duality).
By $L$-cell and $K$-user MIMO downlink, we mean the network in which there are total $L$
transmitters and $K$ distributed receivers in each of cells.
In the downlink, we use the index $k\ell$ to correspond to user $k$ in cell $\ell$ for $k\in\cK$ and $\ell\in\cL$.

The received vector at user $k$ in cell $m$ is expressed by
\beq
\by_{km} =  \sum_{\ell=1}^L \bH_{km,\ell} \bx_{\ell} + \bn_{km} \label{channel_model_downlink}
\eeq
where $\by_{km}$ and $\bn_{km}$ are the $N \times 1$ received vector and additive white Gaussian noise vector
(distributed as $\cC\cN(\bzero, \bI_{N})$), respectively, at user $km$.
In \eqref{channel_model_downlink}, $\bH_{km, \ell}\in\C^{N\times M}$ denotes the channel
matrix from transmitter $\ell$ to user $km$.
The \emph{nondegenerate} channel condition, channel input power constraint, and encoding scheme
are similarly defined as in uplink channel model.
We will use this downlink  model in Section \ref{section6} to investigate the degrees of freedom
with multiuser diversity.

\section{Degrees of Freedom Outer Bound of the $L$-cell and $K$-user MIMO MAC} \label{section3}

\subsection{Degrees of Freedom Outer Bound} \label{subsection3.1}
Given the channel model in \eqref{channel_model1}, we now formulate the degrees of freedom outer bound for the $L$-cell and $K$-user MIMO MAC
when transmitter $\ell k$ has $M_{\ell k}$ antennas and receiver $\ell$ has $N_{\ell}$ antennas.
The following is the main result of this section.

\begin{theorem} \label{theorem_MC_MAC_outer_bound}
The total degrees of freedom of the $L$-cell and $K$-user MIMO MAC with $L>1$ and $K\geq 1$, whose channel matrices are nondegenerate,
is bounded by
\beq
\bSig_d \leq \min\lp \sum_{\ell\in\cL, k\in\cK}M_{\ell k},  \sum\limits_{\ell\in \cL}N_{\ell}, \eta(\cW)  \rp \label{theorem_MC_MAC_outer_bound_equation}
\eeq
where
\beq
\eta(\cW) = \frac{\sum\limits_{\ell\in\cL, k\in\cK}
\min\lp \sum\limits_{q\in\cK}M_{\ell q}\+\sum\limits_{p\in\cL\bsh \ell} M_{pk}, \sum\limits_{p\in\cL} N_p,
     \max\lp \sum\limits_{q\in\cK}M_{\ell q}, \sum\limits_{p\in\cL\bsh \ell} N_p \rp, \max\lp \sum\limits_{p\in\cL\bsh \ell} M_{pk}, N_{\ell} \rp  \rp}{K+L-1} \label{eta}
\eeq
with $\cL=\ls 1, \ldots, L \rs$, $\cK=\ls 1, \ldots, K \rs$, and $\cW=\ls W_{\ell k} \rs_{\ell\in \cL, k\in\cK}$.
\end{theorem}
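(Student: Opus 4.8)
The plan is to prove the three arguments of the outer minimum separately and then take the minimum. The first two, $\sum_{\ell\in\cL,k\in\cK}M_{\ell k}$ and $\sum_{\ell\in\cL}N_\ell$, are the elementary cut-set bounds: since every interference-free stream consumes at least one dimension at both the transmit and the receive side, the total degrees of freedom cannot exceed either the aggregate transmit antennas or the aggregate receive antennas. Formally I would justify each by a full-cooperation genie — letting all transmitters (respectively all receivers) cooperate can only enlarge $\bSig_d$, reducing the network to a single point-to-point MIMO link whose DoF is the corresponding antenna sum.

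The substance is the term $\eta(\cW)$, and here I would follow the message-subset technique of \cite{Host,Cadambe1,Cadambe2,Gou}. Decompose $\cW$ into overlapping subsets indexed by a pair $(\ell,k)\in\cL\times\cK$: let $\cW_{\ell k}$ collect all $K$ messages of cell $\ell$ together with the single message $W_{pk}$ of user $k$ in each other cell $p\in\cL\bsh\ell$. Each $\cW_{\ell k}$ thus contains $K+(L-1)$ messages and has the structure of an $(L-1)$-user interference channel (across the cells $p\neq\ell$) superimposed on a $K$-user MAC (within cell $\ell$), exactly the heterogeneous sub-network announced in the contributions. A direct count shows that any fixed message $W_{\ell'k'}$ lies in precisely $K+L-1$ of these subsets: in $K$ of them through the choice $\ell=\ell'$ with arbitrary $k$, and in $L-1$ of them through $\ell\neq\ell'$ with $k=k'$. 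This multiplicity is exactly the denominator appearing in \eqref{eta}.

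For a fixed $(\ell,k)$ I would bound the sum-DoF of $\cW_{\ell k}$ by collapsing the sub-network into a two-user MIMO interference channel. Merge the $K$ transmitters of cell $\ell$ into one cooperating transmitter of $\sum_{q\in\cK}M_{\ell q}$ antennas talking to base station $\ell$ (with $N_\ell$ antennas), and merge the $L-1$ selected transmitters and their $L-1$ base stations into one cooperating transmit--receive pair with $\sum_{p\in\cL\bsh\ell}M_{pk}$ and $\sum_{p\in\cL\bsh\ell}N_p$ antennas. Because cooperation never decreases degrees of freedom, the sum-DoF of $\cW_{\ell k}$ is upper bounded by the DoF of this two-user MIMO IC, which by the known converse $\min(M_1+M_2,\,N_1+N_2,\,\max(M_1,N_2),\,\max(M_2,N_1))$ of \cite{Jafar1} equals precisely $\min(A_{\ell k},B,C_\ell,D_{\ell k})$, where $A_{\ell k}$ is the transmit-sum, $B=\sum_{p\in\cL}N_p$ is the receive-sum, and $C_\ell,D_{\ell k}$ are the two cross $\max$ terms of \eqref{eta}. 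Writing $d_{\ell k}$ for the DoF of $W_{\ell k}$ and summing the inequality $\sum_{q\in\cK}d_{\ell q}+\sum_{p\in\cL\bsh\ell}d_{pk}\leq\min(A_{\ell k},B,C_\ell,D_{\ell k})$ over all $LK$ pairs $(\ell,k)$, the left side becomes $(K+L-1)\bSig_d$ by the multiplicity count, yielding $\bSig_d\leq\eta(\cW)$.

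The main obstacle is the two cross terms $C_\ell$ and $D_{\ell k}$, i.e. the interference-alignment-type bounds $\max(M_1,N_2)$ and $\max(M_2,N_1)$ for the collapsed two-user channel. Unlike the transmit and receive cut-set bounds, these cannot be obtained by merely counting antennas; they require a genie-aided argument in which the receiver of one collapsed user is supplied with the message of the other, cancels the now-reconstructible interference, and is then constrained by the dimension of the residual useful observation. I would either invoke the asymmetric two-user MIMO IC converse of \cite{Jafar1} directly or reproduce its genie construction inside the collapsed network, taking care that the cooperation step preserves the \emph{nondegenerate} (generic full-rank) hypothesis so that all dimension counts hold almost surely.
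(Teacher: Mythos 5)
Your proposal is correct and follows essentially the same route as the paper: the same message subsets $\cW^{\ell k}$ (all of cell $\ell$'s messages plus user $k$'s message in each other cell), the same cooperation-based collapse of each subset to a two-user MIMO interference channel with antenna pairs $(\sum_{q}M_{\ell q},N_\ell)$ and $(\sum_{p\neq\ell}M_{pk},\sum_{p\neq\ell}N_p)$, the same multiplicity count $K+L-1$, and the same combination with the trivial full-cooperation cut-set bound. The only difference is that you flag the genie-aided nature of the $\max(\cdot,\cdot)$ terms explicitly, whereas the paper simply invokes the known two-user MIMO IC converse.
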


\begin{proof}
The approach taken to derive the outer bound in \eqref{theorem_MC_MAC_outer_bound_equation} is to split the whole
message set $\cW=\ls W_{\ell k} \rs_{\ell\in\cL, k\in\cK}$ into subsets,
derive the outer bound associated with each of the subsets, and
combine all of the outer bounds to gain the total degrees of freedom outer bound.
In addition, we assume perfect channel knowledge of all links at all nodes.

Suppose we reduce the $L$-cell and $K$-user MIMO MAC to an $L$-cell heterogenous MIMO uplink channel
where the $L-1$ cells (among $L$ cells) constitute a $(L-1)$-user MIMO interference channel (IC) and the remaining single cell forms a $K$-user MIMO MAC.
We refer to this network as the $(1, L-1)$ \emph{MAC-IC uplink HetNet}.
Fig. \ref{Fig6} represents the $(1,2)$ \emph{MAC-IC uplink HetNet} composed of a single cell $2$-user MIMO MAC and $2$-user MIMO interference channel.
This $(1, L-1)$ MAC-IC uplink HetNet is formed from the $L$-cell and $K$-user MIMO MAC by eliminating messages in $\cW$
that do not constitute the information flow in the $(1,L-1)$ MAC-IC uplink HetNet channel.

Let the $\ell$th cell among $L$ cells is designated as the $K$-user MIMO MAC.
Then, the rest of the $L-1$ cells forms an $(L-1)$-user MIMO interference channel
by picking the $k$th user in each of the cells in $\cL\bsh\ell$,
i.e., the index set for the $L-1$ users is $\ls 1k, \ldots, (\ell-1) \ k, (\ell+1) \ k, \ldots Lk\rs$.
Message sets associated with the $K$-user MIMO MAC and $(L-1)$-user MIMO interference channel are then
given by $\ls W_{\ell q} \rs_{q\in \cK}$ and $\ls W_{pk} \rs_{p\in\cL\bsh \ell}$, respectively.
We define these two disjoint message sets as
\beq
\cW^{\ell k}= \ls W_{\ell q} \rs_{q\in\cK} \cup \ls W_{pk} \rs_{p\in\cL\bsh \ell}. \label{message_spliting}
\eeq
The degrees of freedom outer bound is first argued for each of the $LK$ sets $\ls \cW^{\ell k} \rs_{\ell\in\cL, k\in\cK}$, and
$LK$ outer bounds are combined by accounting the overlapped messages.

Assume perfect cooperations between $K$ users in cell $\ell$ and between
$L-1$ users and the corresponding $L-1$ receivers in the $(L-1)$-user MIMO interference channel.
Then, the $(1,L-1)$ MAC-IC uplink HetNet with $\cW^{\ell k}$ becomes a two-user interference channel with
transmit and receive antenna pairs $\lp \sum\limits_{q\in \cK}M_{\ell q}, N_{\ell} \rp$ for the first link and
$\lp  \sum\limits_{p\in\cL\bsh\ell}M_{p k},  \sum\limits_{p\in\cL\bsh\ell}N_{p} \rp$ for the second link.
It is well known that the spatial degrees of freedom of an $(M_1, N_1)$, $(M_2, N_2)$ two-user MIMO interference channel
is characterized as $\min( M_1+M_2, N_1+N_2, \max(M_1, N_2), \max(M_2, N_1) )$ \cite{Jafar2}.
Thus, the degrees of freedom outer bound associated with message set $\cW^{\ell k}$ is characterized by
\beq
\min\lp \sum\limits_{q\in\cK}M_{\ell q}\+\sum\limits_{p\in\cL\bsh \ell} M_{pk}, \sum\limits_{p\in\cL} N_p
,\max\lp \sum\limits_{q\in\cK}M_{\ell q}, \sum\limits_{p\in\cL\bsh \ell} N_p \rp, \max\lp \sum\limits_{p\in\cL\bsh \ell} M_{pk}, N_{\ell} \rp  \rp. \label{3.15}
\eeq
In the same manner, the outer bound associated with the message set $\cW^{\bar{\ell} \bar{k}}$ with $\bar{\ell} \neq \ell$ or
$\bar{k}\neq k$ is also determined by \eqref{3.15}.
Since there are total $KL$ message subsets and each message repeats $K+L-1$ times over $KL$ message subsets
(following from the splitting approach in \eqref{message_spliting}),
from \eqref{3.15} the total degrees of freedom associated with $\cW$ is bounded by
\beq
\bSig_d
\leq
\frac{\sum\limits_{\ell\in\cL, k\in\cK} \min\lp \sum\limits_{q\in\cK}M_{\ell q}\+\sum\limits_{p\in\cL\bsh \ell} M_{pk}, \sum\limits_{p\in\cL} N_p ,
      \max\lp \sum\limits_{q\in\cK}M_{\ell q}, \sum\limits_{p\in\cL\bsh \ell} N_p \rp, \max\lp \sum\limits_{p\in\cL\bsh \ell} M_{pk}, N_{\ell} \rp  \rp}
      {K+L-1}. \label{3.22}
\eeq

Meanwhile, a trivial bound is obtained by allowing perfect cooperation among $KL$ transmitters and full cooperation corresponding
$L$ receivers of the $L$-cell and $K$-user MIMO MAC as
\beq
\bSig_d \leq \min\lp \sum_{\ell\in\cL, k\in\cK}M_{\ell k},  \sum\limits_{\ell\in \cL}N_{\ell} \rp. \label{3.24}
\eeq
Combining two bounds in \eqref{3.22} and \eqref{3.24} yields the outer bound result in \eqref{theorem_MC_MAC_outer_bound_equation}.
\end{proof}

The characterized bound is general, in that it includes networks with $K\geq 1$ and $L>1$ for arbitrary numbers of
transmit and receive antennas.

The converse result in \eqref{theorem_MC_MAC_outer_bound_equation} can be further relaxed and simplified
by upper bounding $\eta(\cW)$ in \eqref{eta} as
\beq
\eta(\cW)\leq
\min\!\lp \! \sum_{\ell\in\cL, k\in\cK} \!\! M_{\ell k}, \frac{KL\cdot\sum\limits_{p\in\cL} \!N_p}{K+L-1},
     \frac{\sum\limits_{\ell\in\cL, k\in\cK} \! \min\!\lp\! \max\lp \sum\limits_{q\in\cK}M_{\ell q}, \sum\limits_{p\in\cL\bsh \ell} N_p \rp, \max\lp \sum\limits_{p\in\cL\bsh \ell} M_{pk}, N_{\ell} \rp \!\rp}
          {K+L-1} \!\rp \label{eta1}
\eeq
where in \eqref{eta1} the summation $\sum\limits_{\ell\in\cL, k\in\cK}$ is taken for operands inside of $\min(\cdot)$ in \eqref{eta} and
we use the facts that
\beq
\sum\limits_{\ell\in\cL, k\in\cK} \frac{\sum\limits_{q\in\cK}M_{\ell q}\+\sum\limits_{p\in\cL\bsh \ell} M_{pk}}{K+L-1} = \sum_{\ell\in\cL, k\in\cK}M_{\ell k} \nonumber
\eeq
and
\beq
\sum\limits_{\ell\in\cL, k\in\cK} \frac{\sum\limits_{p\in\cL} N_p}{K+L-1}= \frac{KL\sum\limits_{p\in\cL} N_p}{K+L-1}. \nonumber
\eeq
Since $\frac{KL}{K+L-1}\sum\limits_{p\in\cL} N_p\geq \sum\limits_{\ell\in\cL} N_{\ell}$ for $K,L\geq 1$,
combining the two bounds in \eqref{eta1} and \eqref{3.24} yields
\beq
\bSig_d\leq
\min\lp \sum_{\ell\in\cL, k\in\cK}  M_{\ell k}, \sum\limits_{\ell\in\cL} N_{\ell},
     \frac{\sum\limits_{\ell\in\cL, k\in\cK}\min\lp \max\lp \sum\limits_{q\in\cK}M_{\ell q}, \sum\limits_{p\in\cL\bsh \ell} N_p \rp, \max\lp \sum\limits_{p\in\cL\bsh \ell} M_{pk}, N_{\ell} \rp \rp}
          {K+L-1} \rp. \label{relaxed_bound}
\eeq
As mentioned earlier, our focus is mainly on an homogeneous antenna distribution.
The next corollary presents the required outer bound.
\begin{corollary} \label{corollary_homogenous_outerbound}
The total spatial degrees of freedom of the $L$-cell and $K$-user MIMO MAC with $M$ transmit antennas and $N$ receive antennas
is bounded by
\beq
\bSig_d \leq \min\lp KLM, LN, \frac{KL}{K+L-1}\max\lp KM, (L-1)N \rp, \frac{KL}{K+L-1}\max\lp (L-1)M, N \rp \rp. \label{corollary_homogenous_outerbound_equation}
\eeq
\end{corollary}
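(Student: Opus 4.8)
The plan is to specialize the already-relaxed outer bound \eqref{relaxed_bound} to the homogeneous antenna configuration, so that no new information-theoretic argument is required. First I would substitute $M_{\ell k}=M$ for every $\ell\in\cL$, $k\in\cK$ and $N_\ell=N$ for every $\ell\in\cL$, then evaluate the summations appearing in \eqref{relaxed_bound}. The first two arguments collapse immediately to $\sum_{\ell\in\cL, k\in\cK} M_{\ell k}=KLM$ and $\sum_{\ell\in\cL} N_\ell=LN$. For the third argument I would reduce the four inner summations, namely $\sum_{q\in\cK} M_{\ell q}=KM$, $\sum_{p\in\cL\bsh\ell} N_p=(L-1)N$, $\sum_{p\in\cL\bsh\ell} M_{pk}=(L-1)M$, and $N_\ell=N$, so that each summand of the numerator equals $\min\lp \max(KM,(L-1)N), \max((L-1)M,N) \rp$, a value that is independent of the pair $(\ell,k)$.

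Since this common value does not depend on the summation index and there are exactly $KL$ message subsets $\ls \cW^{\ell k} \rs$, the numerator is $KL$ times that single term, and the third argument of \eqref{relaxed_bound} becomes $\frac{KL}{K+L-1}\min\lp \max(KM,(L-1)N), \max((L-1)M,N) \rp$. The one substantive step is to recognize that the positive scalar $\frac{KL}{K+L-1}$ distributes through the inner minimum, i.e.\ $c\min(a,b)=\min(ca,cb)$ for any $c>0$. Applying this identity splits the single scaled minimum into the two separate scaled maxima that appear in \eqref{corollary_homogenous_outerbound_equation}, thereby turning the three-argument minimum of \eqref{relaxed_bound} into the four-argument minimum claimed in the corollary.

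I do not expect a genuine obstacle here; the result is a direct specialization, and the only care needed is the clerical verification that all $KL$ summands reduce identically under the homogeneous assumption and that the scalar may legitimately be pulled inside the minimum. As an alternative that avoids relying on the intermediate relaxation, I could substitute the homogeneous values directly into $\eta(\cW)$ in \eqref{eta}: each summand there becomes $\min\lp (K+L-1)M, LN, \max(KM,(L-1)N), \max((L-1)M,N) \rp$, and after multiplying by $\frac{KL}{K+L-1}$ the first term reproduces $KLM$ while the $LN$ term is dominated because $\frac{KL}{K+L-1}\geq 1$ for $K,L\geq 1$ (equivalently $(K-1)(L-1)\geq 0$); combining with the trivial bound \eqref{3.24} then yields the same four-way bound.
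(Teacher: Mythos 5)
Your proposal is correct and follows essentially the same route as the paper, which likewise obtains the corollary by substituting $M_{\ell k}=M$ and $N_{\ell}=N$ into \eqref{relaxed_bound} and evaluating the sums; the observation that $c\min(a,b)=\min(ca,cb)$ for $c>0$ is the only (elementary) step needed to split the scaled inner minimum into the two scaled maxima of \eqref{corollary_homogenous_outerbound_equation}. Your alternative via direct substitution into $\eta(\cW)$ together with \eqref{3.24} is a harmless variant of the same computation.
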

\begin{proof}
The bound can be obtained by substituting $M_{\ell k}=M_{\ell q}=M_{pk}=M$ and $N_{\ell}=N_p=N$ in \eqref{relaxed_bound} and taking
all the summations.
\end{proof}

\subsection{$(1,L-1)$ MAC-IC Uplink HetNet}
The characterized outer bound utilizes insight from a limit of the total degrees of freedom for an $L$-cell heterogeneous network, i.e.,
$(1,L-1)$ MAC-IC uplink HetNet. Denote $M_q$ and $N$ as the numbers of antennas at user $q$ and the base station in the $K$-user MIMO MAC, respectively,
and represent $M_p$ and $N_p$ as the number antennas at user $p$ and the corresponding receiver in the $(L-1)$-user MIMO interference channel, respectively.

\begin{corollary} \label{corollary_HetNet}
Denote $\bSig_{L-1,1}$ as the total degrees of freedom of the $(L-1,1)$ \emph{MAC-IC uplink HetNet}. Then,
\beq
\bSig_{L-1,1} \leq
\min\lp \sum\limits_{q=1}^{K} M_{q}\+\sum\limits_{p=1}^{L-1} M_{p}, \sum\limits_{p=1}^{L} N_{p} ,
        \max\lp \sum\limits_{q=1}^{K} M_{q}, \sum\limits_{p=1}^{L-1} N_p \rp, \max\lp \sum\limits_{p=1}^{L-1} M_{p}, N \rp  \rp
\eeq
\end{corollary}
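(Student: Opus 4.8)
The plan is to observe that this bound is precisely the intermediate estimate already established within the proof of Theorem \ref{theorem_MC_MAC_outer_bound}, now isolated as a standalone statement about the HetNet. Accordingly, I would prove it by reducing the $(L-1,1)$ MAC-IC uplink HetNet to a two-user MIMO interference channel through cooperation, and then invoking the two-user MIMO IC degrees of freedom formula $\min\lp M_1\+M_2, N_1\+N_2, \max(M_1, N_2), \max(M_2, N_1) \rp$ cited there.

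First I would grant full cooperation among the $K$ transmitters of the single-cell MIMO MAC, collapsing them into one super-transmitter with $\sum_{q=1}^{K} M_{q}$ antennas whose signal is intended for the base station with $N$ antennas. In parallel, I would allow full cooperation among the $L-1$ transmitters of the interference channel and, separately, among its $L-1$ receivers, producing one super-transmitter with $\sum_{p=1}^{L-1} M_{p}$ antennas and one super-receiver with $\sum_{p=1}^{L-1} N_{p}$ antennas. Because a genie that shares messages among cooperating transmitters and received observations among cooperating receivers can only enlarge the capacity region, the sum degrees of freedom $\bSig_{L-1,1}$ of the original HetNet is no larger than that of this enhanced network. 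The enhanced network is exactly a two-user MIMO interference channel: the aggregated MAC transmitter (with $M_1=\sum_{q=1}^{K} M_{q}$) is decoded at the base station (with $N_1=N$) while interfering at the aggregated IC receiver (with $N_2=\sum_{p=1}^{L-1} N_{p}$), and symmetrically the aggregated IC transmitter (with $M_2=\sum_{p=1}^{L-1} M_{p}$) is decoded at the aggregated receiver while interfering at the base station. Substituting these four antenna counts into the formula, and using the convention that the base station is counted as the $L$-th receiver so that $N_1+N_2=N+\sum_{p=1}^{L-1} N_p=\sum_{p=1}^{L} N_p$, reproduces the four terms of the claimed bound.

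The main obstacle is not the algebra but justifying the reduction cleanly. I would need to verify that after aggregation the interference topology genuinely collapses to a two-user configuration, namely that the base station sees only the aggregated IC transmitter as interference and the aggregated IC receiver sees only the aggregated MAC transmitter, with no interference cross-terms lost or double counted, and that the nondegenerate full-rank channel structure is preserved so that the two-user MIMO IC degrees of freedom expression applies directly to the enhanced channel.
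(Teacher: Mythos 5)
Your proposal is correct and follows essentially the same route as the paper: the paper proves this corollary by simply specializing equation \eqref{3.15} from the proof of Theorem \ref{theorem_MC_MAC_outer_bound}, and that equation was itself obtained exactly as you describe, by granting full cooperation within the $K$-user MAC and within the $(L-1)$-user interference channel (transmitters and receivers separately) so that the HetNet collapses to a two-user MIMO interference channel, and then invoking the known bound $\min\lp M_1\+M_2, N_1\+N_2, \max(M_1, N_2), \max(M_2, N_1) \rp$. Your identification of the four aggregated antenna counts matches the paper's, so no further comparison is needed.
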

\begin{proof}
Omit $\ell$ and $k$ attached to $M_{\ell q}$, $N_{\ell}$, and $M_{pk}$ in \eqref{3.15}.
Then, the formula in \eqref{3.15} verifies the corollary.
\end{proof}

Interestingly, the collocated $(L-1)$-user MIMO interference channel and single cell $K$-user MIMO MAC can be viewed as
a two-tier cell deployment where the network consists of $L-1$ femtocells (or picocells) each with a single user
and one macrocell with $K$ users.
Notice that in the two-tier networks, single user transmission at the lower-tier cell is shown to provide
significantly improved throughput and coverage than multiuser transmission \cite{Chandrasekhar}.


\subsection{Virtual MIMO Transmission vs. Selected and Shared Transmission}
Now we are interested in an equivalent channel model to the $L$-cell and $K$-user MIMO MAC.
Consider groups of $L$ distinct users among the $LK$ users (i.e., a total of $K$ user groups) such that the $k$th user
group is formed by grouping the $k$th user in each of the cells, i.e., the $k$th user group is the index set $\ls 1k, 2k, \ldots, Lk \rs$.
For example, Fig. \ref{Fig3a} shows the user grouping for the $L=3$ and $K=2$ MIMO MAC where the first user group
is represented as the index set $\{11,21,31\}$, and the second user group consists of indices $\{12,22,32\}$.
Then, the network is converted to a distributed $K\times L$ homogenous MIMO X channel (see Fig. \ref{Fig3b}).
Here, the equivalent channel of the $L$-cell and $K$-user MIMO MAC is referred to as
the \emph{distributed} $K\times L$ homogenous MIMO X channel because perfect cooperation
among users within each user group is not assumed\footnote{Notice that to meet the original
definition of the X channel in \cite{Maddah,Jafar2, Cadambe1}, the users within the $k$th user group must be perfectly connected,
i.e., in this case, the channel becomes a $K\times L$ MIMO X channel with $LM$ antennas at the transmitter and $N$ antennas at the receiver.
}.

The equivalency between the $L$-cell and $K$-user MIMO MAC and distributed $K\times L$ homogeneous MIMO X channel provides
an interesting insight into the following question:
When using spatial dimensions to transmit messages $\ls W_{\ell k} \rs_{\ell\in\cL, k\in\cK}$, is it better to employ
\emph{multiple distributed transmission} where transmitter $\ell k$, equipped with $M$ antennas, transmits its own message $W_{\ell k}$ or
to employ \emph{selected and shared transmission} where one transmitter, say $1k$ in the $k$th user group
$\ls 1k, 2k, \ldots, Lk \rs$, equipped with $M$ antennas, is selected and transmits all of the messages
$\ls W_{1k}, W_{2k}, \ldots, W_{Lk} \rs$ while other transmitters in the group keep quiet?
Given full CSI at all nodes, \emph{multiple distributed transmission} delivers messages $\ls W_{\ell k} \rs_{\ell \in\cL, k\in\cK}$
through distributed transmitters with the use of total $LKM$ dimensions (e.g., virtual MIMO transmission),
while \emph{selected and shared transmission} uses $KM$ dimensions with the use of partial message sharing through
the perfect links between transmitters.
We can show the later strategy is better in terms of the degrees of freedom than the former strategy for $L=2$ and $K=2$
(see Fig. \ref{Fig5} (a) and Fig. \ref{Fig5} (b)) as follows.

\begin{corollary}
Let $\bSig_{distTX}$ and $\bSig_{shrdTX}$ denote the total degrees of freedom of the \emph{multiple distributed transmission} and
\emph{selected and shared transmission}, respectively, when $L=2$ and $K=2$ with $M=N$. Then,
\beq
\bSig_{distTX} \leq \bSig_{shrdTX}. \nonumber
\eeq
\end{corollary}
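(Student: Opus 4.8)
The plan is to squeeze both quantities against the common value $\frac{4M}{3}$: I will upper bound $\bSig_{distTX}$ by the converse already in hand, and I will recognize $\bSig_{shrdTX}$ as the degrees of freedom of a $2\times 2$ MIMO X channel, whose exact value is known. First I would note that \emph{multiple distributed transmission} is just the original $L$-cell and $K$-user MIMO MAC (each transmitter $\ell k$ sends only $W_{\ell k}$, with no sharing), so $\bSig_{distTX}$ is subject to the converse of Corollary \ref{corollary_homogenous_outerbound}. Substituting $K=L=2$ and $M=N$ into \eqref{corollary_homogenous_outerbound_equation}, the four operands of the minimum become $KLM = 4M$, $LN = 2M$, $\frac{KL}{K+L-1}\max(KM,(L-1)N) = \frac{4}{3}\max(2M,M) = \frac{8M}{3}$, and $\frac{KL}{K+L-1}\max((L-1)M,N) = \frac{4}{3}\max(M,M) = \frac{4M}{3}$. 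Since $\frac{4}{3} < 2 < \frac{8}{3} < 4$, the binding operand is the last one, so $\bSig_{distTX} \leq \frac{4M}{3}$.

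Second, for \emph{selected and shared transmission} I would make the relabelling explicit. Choosing transmitter $1k$ in each of the $K=2$ user groups, transmitter $11$ carries $W_{11}$ (destined for base station $1$) and $W_{21}$ (destined for base station $2$), while transmitter $12$ carries $W_{12}$ and $W_{22}$; the two remaining transmitters stay silent. The surviving network is precisely a two-transmitter, two-receiver MIMO X channel in which each transmitter has a message for each receiver, and because $M=N$ it is the symmetric $(M,M)$ instance. By the exact characterization of the $2\times 2$ MIMO X channel \cite{Jafar2, Maddah}, its total degrees of freedom is $\frac{4M}{3}$, hence $\bSig_{shrdTX} = \frac{4M}{3}$. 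Combining the two estimates gives $\bSig_{distTX} \leq \frac{4M}{3} = \bSig_{shrdTX}$, which is the claim.

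No interference-alignment construction is needed for the inequality itself; the content is entirely structural. The step I expect to require the most care -- the closest thing to an obstacle -- is the evaluation of the minimum in \eqref{corollary_homogenous_outerbound_equation}, since the inequality is tight precisely because the fourth operand $\frac{KL}{K+L-1}\max((L-1)M,N)$ happens to coincide with the X-channel value $\frac{4M}{3}$ at $K=L=2$, $M=N$. I would also flag that the exact $2\times 2$ X-channel result is invoked in the regime $M>1$; the single-antenna case $M=1$ would be handled separately and does not affect the conclusion, since there the shared scheme still dominates. The upshot is that distributed transmission can never exceed the converse, while message sharing already meets it.
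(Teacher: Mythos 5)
Your proposal is correct and follows essentially the same route as the paper: bound $\bSig_{distTX}$ by $\frac{4M}{3}$ via Corollary \ref{corollary_homogenous_outerbound} with $K=L=2$, $M=N$, and identify the selected-and-shared scheme as a $2\times 2$ MIMO X channel with known total degrees of freedom $\frac{4M}{3}$. Your explicit evaluation of the four operands of the minimum and the remark about the $M=1$ edge case are slightly more careful than the paper's one-line computation, but the argument is the same.
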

\begin{proof}
Since the \emph{multiple distributed transmission} with $L=2$ and $K=2$ in Fig. \ref{Fig5} (a) is equivalent to
$2$-cell and $2$-user MIMO MAC, from Corollary \ref{corollary_homogenous_outerbound}
\beq
\bSig_{distTX} \d4&=&\d4 \bSig_{d} \nonumber \\
               \d4&\leq&\d4 \min\lp 4M, 2M, \frac{4}{3}\max(2M,M), \frac{4}{3}\max(M,M) \rp=\frac{4}{3}M. \nonumber
\eeq
The \emph{selected and shared transmission} through perfect link with $L=2$ and $K=2$
is the $2\times 2$ MIMO X channel with $M$ antennas at each node. Hence,
\beq
\bSig_{shrdTX}=\frac{4}{3}M \nonumber
\eeq
where the last equality follows from the optimal degrees of freedom result in \cite{Jafar1}
where the achievable scheme utilizes the simple zero forcing.
\end{proof}

In what follows, we will quote the results in  this section to characterize the optimal degrees of freedom
for $L$-cell and $K$-user MIMO MAC.

\section{Achieving the Optimal Degrees of Freedom} \label{section4}
In the homogenous $L$-cell and $K$-user MIMO MAC, independently encoded $\beta>0$ streams are transmitted as
$\bx_{mk} = \bT_{mk}\bs_{mk}$ from user $mk$ to base station $m$, where $\bs_{mk}\=\lS s_{mk,1} \ldots s_{mk,\beta} \rS^T$
is the $\beta\times 1$ symbol vector carrying message $W_{mk}$ and $\bT_{mk}\in\C^{M\times \beta}$ denotes a linear precoder
which will be chosen to provide interference free signal dimensions to user $mk$.
The $N$-dimensional signal received at base station $m$ is expressed as
\beq
\by_m\=\sum_{k=1}^{K}\!\bH_{m,mk}\bT_{mk}\bs_{mk}\+\sum_{\ell\neq m}^{L}\sum_{k=1}^{K}\bH_{m,\ell k}\bT_{\ell k}\bs_{\ell k}\+\bz_m.  \label{4.0}
\eeq
The achievable schemes must deal with $K(L\-1)\beta$ out-of-cell interference sources
and additionally $(K\-1)\beta$ inner cell interference sources.
This implies that the required spatial antenna dimensions $M$ and $N$ for the zero interference condition with constant channel
coefficients must be determined as a function of $K$, $L$, and $\beta$.

Our base line algorithm is to explore the feasibility of the linear schemes
utilizing the spatial dimensions under zero interference constraints.
Given \eqref{4.0}, our base line algorithm utilizes linear postprocessing matrix $\bP_m\in \C^{ K\beta \times N}$
at receiver $m$ to produce $\beta$ interference free dimensions for each of users.
The two-cell MIMO MAC scenario, which is instructive, is first considered, and a general multicell case is characterized later.

\subsection{Two-Cell MIMO MAC ($L=2$)} \label{section4.1}
The degrees of freedom outer bound in \eqref{corollary_homogenous_outerbound_equation} and zero forcing-based linear schemes
allow the following theorem to be proven.

\begin{theorem} \label{theorem_Two_cell_MAC_optimal_dof}
The two-cell and $K$-user MIMO MAC with the nondegenerate channels, where the
transmitter and receiver have $M\=K\beta$ and $N\=K\beta\+\beta$ or $M\=K\beta\+\beta$
and $N\=K\beta$ antennas, respectively, has the optimal degrees of freedom of $2K\beta$ where $\beta>0$ is a positive integer.
\end{theorem}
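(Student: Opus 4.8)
The plan is to prove matching converse and achievability bounds. The converse $\bSig_d \leq 2K\beta$ is immediate from Corollary \ref{corollary_homogenous_outerbound}: setting $L=2$ in \eqref{corollary_homogenous_outerbound_equation} gives $\bSig_d \leq \min\lp 2KM, 2N, \frac{2K}{K+1}\max(KM,N), \frac{2K}{K+1}\max(M,N) \rp$. In either antenna configuration the fourth operand is $\frac{2K}{K+1}\max(M,N) = \frac{2K}{K+1}(K+1)\beta = 2K\beta$, and a one-line check (using $2N = 2K\beta$ when $N=K\beta$, and $2KM = 2K^2\beta \geq 2K\beta$, $\frac{2K}{K+1}\max(KM,N) \geq 2K\beta$ when $M=K\beta$) shows no operand falls below $2K\beta$. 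Hence $\bSig_d \leq 2K\beta$, and it remains to deliver $\beta$ interference-free streams to each of the $2K$ users.

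For $M\=K\beta\+\beta$ and $N\=K\beta$ I would use \emph{transmit zero forcing} on the model \eqref{4.0}. The cross channel $\bH_{\bar m,mk}$ from user $mk$ to the unintended base station $\bar m \neq m$ lies in $\C^{K\beta \times (K+1)\beta}$ and so has a null space of dimension $M-\min(M,N)=\beta$. Choosing $\bT_{mk}$ to span this null space zero-forces all out-of-cell interference, so base station $m$ sees only its own $K$ users through the $N\times K\beta$ effective matrix $\lS \bH_{m,m1}\bT_{m1} \ \cdots \ \bH_{m,mK}\bT_{mK} \rS$. Since the direct channels are statistically independent of the cross channels defining the precoders, this square matrix has full rank $K\beta\=N$ almost surely; inverting it separates all $K\beta$ streams, and summing the two cells yields $2K\beta$.

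For $M\=K\beta$ and $N\=K\beta\+\beta$ the cross channels have trivial null space, so I would instead align interference. The key geometric fact is that the $K$ range spaces $\ls \mathrm{ran}(\bH_{\bar\ell,\ell k}) \rs_{k\in\cK}$, each of dimension $K\beta$ (hence codimension $\beta$) in the $(K+1)\beta$-dimensional receive space of base station $\bar\ell$, intersect almost surely in a subspace of dimension exactly $(K+1)\beta - K\beta = \beta$, because for generic channels codimensions add. Taking this $\beta$-dimensional intersection as the common interference subspace $\cV_{\bar\ell}$ with basis $\bV_{\bar\ell}$, I solve $\bH_{\bar\ell,\ell k}\bT_{\ell k} = \bV_{\bar\ell}$ for each $k$ (solvable since $\cV_{\bar\ell}$ lies in every range and each cross channel has full column rank). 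This collapses all $K$ out-of-cell interferers from cell $\ell$ onto the single $\beta$-dimensional subspace $\cV_{\bar\ell}$. At the intended base station $\ell$, the $K\beta$ desired directions $\ls \bH_{\ell,\ell k}\bT_{\ell k} \rs$ together with the aligned interference $\cV_\ell$ fill the $N=(K+1)\beta$ receive dimensions in general position almost surely, because the direct channels are independent of the cross channels that define both the precoders and $\cV_\ell$. A receive filter $\bP_\ell$ projecting out $\cV_\ell$ then isolates $K\beta$ interference-free dimensions per cell, again giving $2K\beta$.

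The hard part will be the two genericity claims underlying the \emph{null space interference alignment} case: first, that $K$ generic range spaces of codimension $\beta$ meet in \emph{exactly} a $\beta$-dimensional subspace (a transversality/dimension count that must be argued from the nondegenerate channel assumption, not merely asserted), and second, that after alignment the desired-signal subspace remains full rank and transversal to the aligned interference $\cV_\ell$. Both reduce to showing certain determinants are generically nonzero, which follows from the independence of the direct channels from the cross channels used to build the precoders; making this rigorous is where the real work lies, whereas the transmit-zero-forcing case and the converse are essentially mechanical.
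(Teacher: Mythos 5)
Your proposal is correct and follows essentially the same route as the paper: the converse is the same specialization of Corollary \ref{corollary_homogenous_outerbound}, the $M\=K\beta\+\beta$, $N\=K\beta$ case is the paper's transmit zero forcing verbatim, and your ``align into the common intersection $\cV_{\bar\ell}=\bigcap_{k}ran(\bH_{\bar\ell,\ell k})$ and project it out'' construction is the transmitter-side dual of the paper's null space interference alignment, since $\cV_{\bar\ell}$ is exactly the orthogonal complement of $\mathrm{span}\lp\lS \bN_{\bar\ell,\ell 1}\cdots\bN_{\bar\ell,\ell K}\rS\rp$ used to build $\bP_{\bar\ell}$ in \eqref{4.6}, and the resulting precoder subspaces coincide with $null\lp\bP_{\bar\ell}\bH_{\bar\ell,\ell k}\rp$. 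The two genericity facts you flag are handled in the paper by the rank assertion following \eqref{4.6} and by Lemma \ref{lemma_linear_independent_columns} (independence of the precoders and projection from the direct channels), exactly as you anticipate.
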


\subsubsection*{Converse of Theorem \ref{theorem_Two_cell_MAC_optimal_dof}} \label{outer_bound_MC_MAC_two_cell}
When $M=K\beta+\beta$ and $N=K\beta$, the outer bound in \eqref{corollary_homogenous_outerbound_equation} returns
\beq
\bSig_d \d4&\leq&\d4 \min\!\lp\! 2KM,2N, \frac{2K\max(\!KM,N\!)}{K\+1}, \frac{2K\max(\!M,N\!)}{K\+1} \!\rp  \nonumber \\
\d4&=&\d4 \min\!\lp\! 2K(K\+1)\beta, 2K\beta, \frac{2K^2 (K\+1)\beta}{K\+1}, 2K\beta \!\rp = 2K\beta \label{4.0.0}.
\eeq
When $M=K\beta$ and $N=K\beta+\beta$, we have
\beq
\bSig_d \leq \min\!\lp\! 2K^2\beta, 2(K+1)\beta, \frac{4K^3}{K\+1}, 2K\beta\!\rp = 2K\beta \label{4.0.1}.
\eeq
Combining two quantities in \eqref{4.0.0} and \eqref{4.0.1} verifies the converse. \hfill \qed

\subsubsection*{Achievability of Theorem \ref{theorem_Two_cell_MAC_optimal_dof}} \label{inner_bound_MC_MAC_two_cell}
The achievability is argued by showing that $\beta$ interference free dimensions per user are resolvable
at each of base stations.
For simplicity, we define $\barm$ as $\barm$$=$$\cL\bsh m$ where $\cL=\{1,2\}$ for two-cell case.

\subsubsection{$M=K\beta+\beta$ and $N=K\beta$}
When $M\= K\beta\+\beta$ and $N\=K\beta$, user $\barm k$ picks the precoding matrix
$\bT_{\barm k}$ such that
\beq
\span\lp \bT_{\barm k} \rp \subset null\lp \bH_{m,\barm k} \rp, \ k\in\cK. \label{4.2}
\eeq
Since $\bH_{m,\barm k}$$\in$$\C^{K\beta\times (K\beta\+\beta)}$ is drawn from an i.i.d. continuous distribution,
$\bT_{\barm k}$$\in$$\C^{M\times \beta}$ with $\rank(\bT_{\barm k})\=\beta$ can be found almost surely such that
$\bH_{m,m k}\bT_{\barm k}=\bzero$ for all $k\in \cK$.
In this way, user $\barm k$ precludes interference to base station $m$.
Applying precoders $\ls\bT_{\barm k}\rs_{k\in\cK, \barm\in\cL}$ designed by \eqref{4.2} to \eqref{4.0} yields
\beq
\by_m = \sum\limits_{k\in\cL}\bH_{m,mk}\bT_{mk}\bs_{mk} + \bz_m. \nonumber
\eeq

The decodability of $K\beta$ dimensions from $\by_m$ requires
\beq
\bG_m=\lS \bH_{m,m1}\bT_{m1} \ \cdots \ \bH_{m,mK}\bT_{mK} \rS \in \C^{K\beta \times K\beta}  \label{4.3}
\eeq
to be a full rank.
Since $\bT_{mk}$ in \eqref{4.2} is based on $\bH_{\barm,m k}$, $\bT_{mk}$ is mutually independent of $\bH_{m,mk}$.
Then,
by Lemma \ref{lemma_linear_independent_columns} in Appendix \ref{appendix_linear_independent_columns},
$\bH_{m,mk}\bT_{mk}\in\C^{K\beta\times \beta}$ is a full rank and spans a $\beta$-dimensional subspace with probability one.
Since $\ls \bH_{m,mk}\bT_{mk} \rs_{k\in\cK}$ are independently realized by continuous distributions and
each $\bH_{m,mk}\bT_{mk}$ spans $\beta$-dimensional subspace, the aggregated channel $\bG_m\in\C^{K\beta\times K\beta}$
spans $K\beta$-dimensional space almost surely.
This ensures achievability of $2K\beta$ degrees of freedom when $M=K\beta+1$ and $N=K\beta$.

\subsubsection{$M=K\beta$ and $N=K\beta+\beta$}
When $M=K\beta$ and $N=K\beta+\beta$, an achievable scheme employs the postprocessing matrix
$\bP_m\in\C^{K\beta\times (K\beta\+\beta)}$ designed at base station $m$.

Suppose a set of matrices $\{ \lS \bH_{m,\barm k} \ \bN_{m,\barm k} \rS \}_{k\in\cK}$ where matrix
$\lS \bH_{m,\barm k} \ \bN_{m,\barm k} \rS\in \C^{(K\beta+\beta)\times(K\beta+\beta)}$ is formed by concatenating
two matrices $\bH_{m,\barm k}\in \C^{(K\beta+\beta)\times K\beta}$ and $\bN_{m,\barm k}\in\C^{(K\beta+\beta)\times \beta}$
such that $\lS \bH_{m,\barm k} \ \bN_{m,\barm k} \rS$ is full rank matrix for $k\in\cK$, i.e., $\bN_{m,\barm k}^*\bH_{m,\barm k}=\bzero$.
Then, $\bP_m$$\in$$\C^{K\beta\times (K\beta+\beta)}$ is designed such that
\beq
\span\lp \bP_m^* \rp = \span\lp \lS \bN_{m,\barm 1} \  \bN_{m,\barm 2} \ \cdots \bN_{m,\barm K} \rS \rp,  \label{4.6.0}
\eeq
i.e., the column subspace of $\bP_m^*$ spans the same column subspace as
$\lS \bN_{m,\barm 1} \  \bN_{m,\barm 2} \ \cdots \bN_{m,\barm K} \rS\in\C^{(K\beta+\beta)\times K\beta}$.
By \eqref{4.6.0}, $\bP_m$ is constructed by
\beq
\bP_m \= \bPi \lS \bN_{m,\barm 1} \  \bN_{m,\barm 2} \ \cdots \bN_{m,\barm K} \rS^*, \ m\in\cL \label{4.6}
\eeq
where $\bPi\in\C^{K\beta\times K\beta}$ is any full rank matrix.
Notice the construction in \eqref{4.6} with $\{\bN_{m,\barm k}\}_{k\in\cK}$ always ensures $\rank(\bP_m)\=K\beta$ and
\beq
\dim\lp null\lp \bP_{m}\bH_{m,\barm k} \rp \rp= \beta \label{4.6.1}
\eeq
for all $k\in\cK$.

Given $\ls \bP_m \rs_{m\in\cL}$ in \eqref{4.6}, we find the precoder $\bT_{\barm k}\in\C^{K\beta \times \beta}$ under the zero out-of-cell
interference constraint such that
\beq
\span\lp \bT_{\barm k} \rp\subset null\lp \bP_{m}\bH_{m,\barm k} \rp, \ k\in\cK, \ \barm\in\cL, \nonumber
\eeq
where such $\bT_{\barm k}$ with $\rank\lp \bT_{\barm k} \rp=\beta$ exists almost surely because of \eqref{4.6.1}.
Then, the projected channel output at the base station $m$ is given by
\beq
\bP_m\by_m \!\=\! \sum_{k=1}^{K}\bP_m\bH_{m,mk}\!\bT_{mk}\bs_{mk}\+\bP_m\bz_m \!\=\! \bP_m\bG_m \tbs_m \+ \tbz_m \label{4.8}
\eeq
where $\bG_m=[ \bH_{m,m1}\!\bT_{m1}  \cdots \bH_{m,mK}\!\bT_{mK} ]\in\C^{(K\beta+\beta)\times K\beta}$,
$\tbz_m=\bP_m\bz_m$, and $\tbs_m=[\bs_{m1}^T \cdots \bs_{mK}^T]^T$.
For decodability, we need to check that $\bP_m\bG_m$ has linearly independent columns.
Analogous to \eqref{4.3}, $\bG_m$ in \eqref{4.8} spans a $K\beta$-dimensional subspace almost surely.
Note that $\bP_m$ in \eqref{4.6} and $\bG_m$ are based on a continuous distribution and are mutually independent.
Thus,
$\Pr\big( \det\big(  \bP_m\bG_m \big)\=0 \big)\=0$ (by Lemma \ref{lemma_linear_independent_columns} in Appendix \ref{appendix_linear_independent_columns})
implying the decodability of $K\beta$ interference free streams per cell.
\hfill \qed

When $M=K\beta$ and $N=K\beta+\beta$, the achievable scheme aligns the null spaces of the out-of-interference channel $\{ \bH_{m,\barm k}^* \}_{k\in\cK}$
to the row subspace of $\bP_m$, which is referred to as \emph{null space interference alignment}.
In the null space interference alignment, the post processing matrix $\bP_m$ compresses $K\beta$-dimensional
out-of-cell interference channels to $(K-1)\beta$-dimensional signal subspace because the $\beta$-dimensional row subspace of
$\bP_m$ always lies in $null( \bH_{m,\barm k}^* )$ for all $k\in\cK$.
In fact, since the condition in \eqref{4.6.1} describes the required condition about the right matrix null space of $\bP_m\bH_{m,\barm k}$,
omitting the full rank matrix $\bPi\in\C^{K\beta\times K\beta}$ on the left side of $\bP_m$ does not change the dimension
condition in \eqref{4.6.1}, i.e.,
\beq
\dim\lp null\lp \bPi^{-1}\bP_{m}\bH_{m,\barm k} \rp \rp=\dim\lp null\lp \bP_{m}\bH_{m,\barm k} \rp \rp= \beta, \  k\in\cK. \label{4.7}
\eeq

We have discussed the achievability of the optimal degrees of freedom for the two cell case by using
transmit zero forcing (with $M=K\beta+\beta$ and $N=K\beta$) and null space interference alignment
(with $M=K\beta$ and $N=K\beta+\beta$) for arbitrary $K>0$ and $\beta>0$.
As will be seen in Section \ref{section5}, the basic idea of the null space interference alignment
can be generalized for $L\geq 2$ with $N>M$. The generalized scheme does not necessarily
achieve the optimal degrees of freedom, but it resolves achievable $\beta>0$
interference free dimensions for each of users with various antenna dimensional conditions.

\subsection{Multicell MIMO MAC ($L\geq2$)} \label{section4.2}

In the uplink, the scenario of $N>M$ is realistic because the system dimension at
the user side is often limited.
In this scenario, one of the extreme choices for $M$ and $N$ is when the user has $\beta$ antennas for
$\beta$ stream multiplexing, i.e., $M=\beta$, and interference cancellation is mainly accomplished
at the base station. As will be seen in the next theorem, employing the minimum number of transmit
antennas generally achieves the optimal degrees of freedom for $L$-cell and $K$-user MIMO MAC.

\begin{theorem} \label{theorem_RX_ZF_optimal_dof}
Given $M=\beta$ transmit antennas and $N=LK\beta$ receive antennas,
the $L$-cell and $K$-user MIMO MAC with nondegenerate channel matrices
has the optimal degrees of freedom of $LK\beta$.
\end{theorem}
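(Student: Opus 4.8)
The plan is to prove the statement in two parts, paralleling the treatment of Theorem \ref{theorem_Two_cell_MAC_optimal_dof}: the converse $\bSig_d\leq LK\beta$ follows from the relaxed homogeneous outer bound, and the matching achievability $\bSig_d\geq LK\beta$ is obtained by a receive zero-forcing scheme that delivers $\beta$ interference-free dimensions to each of the $LK$ users.

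For the converse, I would substitute $M=\beta$ and $N=LK\beta$ directly into \eqref{corollary_homogenous_outerbound_equation} of Corollary \ref{corollary_homogenous_outerbound}. The four operands become $KLM=KL\beta$, $LN=L^2K\beta$, $\frac{KL}{K+L-1}\max(KM,(L-1)N)=\frac{K^2L^2(L-1)\beta}{K+L-1}$, and $\frac{KL}{K+L-1}\max((L-1)M,N)=\frac{K^2L^2\beta}{K+L-1}$. Each of the last three is at least $KL\beta$: trivially $L^2K\beta\geq KL\beta$, while both fractional terms dominate $\frac{K^2L^2\beta}{K+L-1}=\frac{KL}{K+L-1}\cdot KL\beta\geq KL\beta$, the final inequality being equivalent to $(K-1)(L-1)\geq0$. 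Hence the minimum is $KL\beta$ and $\bSig_d\leq LK\beta$.

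For achievability, since each user has only $M=\beta$ antennas, I would let user $\ell k$ send $\beta$ streams using any full-rank precoder $\bT_{\ell k}\in\C^{\beta\times\beta}$ and concentrate all interference suppression at the receivers. At base station $m$, the $(L-1)K$ out-of-cell channels $\ls\bH_{m,\ell k}\rs_{\ell\neq m,\,k\in\cK}$ each have rank $\beta$ and, by nondegeneracy, their combined column space has dimension $(L-1)K\beta$; since $N=LK\beta$, its orthogonal complement has dimension $N-(L-1)K\beta=K\beta$. I would then construct $\bP_m\in\C^{K\beta\times N}$ of rank $K\beta$ whose rows lie in this complement, so that $\bP_m\bH_{m,\ell k}=\bzero$ for all $\ell\neq m$ and $k\in\cK$. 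Projecting \eqref{4.0} through $\bP_m$ removes every out-of-cell term and leaves
\beq
\bP_m\by_m = \sum_{k=1}^{K}\bP_m\bH_{m,mk}\bT_{mk}\bs_{mk}+\bP_m\bz_m = \bP_m\bG_m\tbs_m+\tbz_m, \nonumber
\eeq
with $\bG_m=\lS\bH_{m,m1}\bT_{m1}\ \cdots\ \bH_{m,mK}\bT_{mK}\rS\in\C^{N\times K\beta}$ and $\bP_m\bG_m\in\C^{K\beta\times K\beta}$. Decoding the $K\beta$ desired streams of cell $m$ then reduces to the invertibility of $\bP_m\bG_m$, and summing $K\beta$ recovered streams over the $L$ cells gives $LK\beta$ degrees of freedom.

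The main obstacle, exactly as in the two-cell null space argument, is the full-rank claim for $\bP_m\bG_m$. The crux is that $\bP_m$ is determined solely by the out-of-cell channels $\ls\bH_{m,\ell k}\rs_{\ell\neq m}$ and is therefore statistically independent of the desired channels $\ls\bH_{m,mk}\rs_{k\in\cK}$ composing $\bG_m$; granting this independence, Lemma \ref{lemma_linear_independent_columns} in Appendix \ref{appendix_linear_independent_columns} yields $\Pr\lp\det\lp\bP_m\bG_m\rp=0\rp=0$, so $\bP_m\bG_m$ is full rank and all $K\beta$ streams per cell are decodable almost surely. A supporting point worth checking is that the out-of-cell interference occupies exactly $(L-1)K\beta$ dimensions (so that $\bP_m$ retains precisely the $K\beta$ dimensions the desired signal needs), which again follows from the i.i.d.\ continuous, full-rank (nondegenerate) channel assumption.
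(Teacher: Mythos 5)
Your proposal is correct and follows essentially the same route as the paper's own proof in Appendix B: the converse by direct substitution into Corollary \ref{corollary_homogenous_outerbound} (reducing to $KL\geq K+L-1$), and achievability by receive zero forcing with $\bP_m$ chosen in the $K\beta$-dimensional left null space of the aggregated out-of-cell interference channels, with decodability of $\bP_m\bG_m$ secured by the independence of $\bP_m$ from the desired channels and Lemma \ref{lemma_linear_independent_columns}. No gaps to report.
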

\begin{proof}
See Appendix \ref{appendix_theorem_RX_ZF_optimal_dof}.
\end{proof}

The inner bound of the theorem is shown by using simple receive zero forcing.
The theorem suggests that given full CSI at the base stations, other than allowing some level of coordinated transmit
and receive filtering, employing base station-centric interference nulling scheme is potentially simple and reliable
in the high SNR regime in the multicell multiuser MIMO uplink scenario (some of which can be practically achieved in small cell scenarios).

Analogous to \cite{Jafar2,Jafar1}, Theorem \ref{theorem_Two_cell_MAC_optimal_dof} and Theorem \ref{theorem_RX_ZF_optimal_dof}
show that the simple zero forcing is indeed optimal in terms of the achievable degrees of freedom for $L$-cell and $K$-user MIMO MAC.

\section{General Framework for the Null Space Interference Alignment} \label{section5}

Complete characterization of the optimal spatial degrees of freedom with constant channel coefficients
for the $L$-cell and $K$-user MIMO networks is still unknown and
often overconstrained.
However, this difficulty does not preclude the existence of a general linear scheme that resolves $\beta>0$
interference free dimensions per user.
In this section, the basic idea of the null space interference alignment (with $N>M$) in Section
\ref{inner_bound_MC_MAC_two_cell} is extended to a general framework.

Throughout the section, we will use following two definitions to measure the size of
overlapping of the out-of-cell interference null space.

Suppose there are $K$ i.i.d. full rank matrices (i.e., nondegenerate)
$\big\{ \lS \bA_{k} \ \bB_{k} \rS \big\}_{k\in\cK}$, $\cK=\ls 1,2,\ldots, K \rs$, where $\lS \bA_k \ \bB_k \rS$
is square and invertible with $\bA_{k}\in\C^{n\times m}$ and $\bB_{k}\in\C^{n\times (n-m)}$ ($n>m$).

\begin{definition} \label{definition1}
A set $\big\{ \bA_{k} \big\}_{k\in\cK}$ is referred to as having a null space with \emph{geometric multiplicity} $\gam$,
if all $\gam$-tuple combinations of the matrices $\{ \bB_{\pi_1 }, \ldots, \bB_{\pi_{\gam} } \}$
with $\{\pi_i\}_{i=1}^{\gam}\subset \cK$, $\pi_i\neq\pi_j$ if $i\neq j$, have nonempty intersection, i.e.,
\beq
\bigcap\limits_{i=1}^{\gam} ran(\bB_{\pi_{i} })\neq \phi \nonumber
\eeq
and at the same time $\gam$ is the \emph{maximum} possible value.
\end{definition}

\begin{definition} \label{definition2}
Given $\gam\geq 1$ in Definition \ref{definition1}, the intersection null space of $\ls \bA_{k} \rs_{k\in\cK}$
is referred to as having \emph{algebraic multiplicity} $\mu$ if
\beq
\mu\= \dim\lp \bigcap\limits_{i=1}^{\gam} ran(\bB_{\pi_i }) \rp.  \nonumber
\eeq
\end{definition}

The quantities $\gam$ and $\mu$ in Definition \ref{definition1} and \ref{definition2}, respectively,
can be formulated as in the following lemma that elucidates the linear algebraic relation between $\gam$ and $\mu$.

\begin{theorem} \label{theorem_geometric_multiplicity}
Given a set of nondegenerate full rank matrices $\ls \lS \bA_{k} \ \bB_k \rS \rs_{k\in\cK}$ with $\cK=\ls 1, \ldots, K \rs$
where $\bA_{k}\in\C^{n\times m}$ ($n>m$) and $\bB_{k}\in\C^{n\times (n-m)}$,
respectively, the geometric multiplicity $\gam$ of $\ls \bA_{k} \rs_{k\in\cK}$
is characterized by
\beq
\gam\=\min\lp \left\lceil \frac{n-m}{m} \right\rceil, K \rp \nonumber
\eeq
and the algebraic multiplicity $\mu$ ($1\leq \mu\leq m$) satisfies
\beq
\mu=n-\gam m. \nonumber
\eeq
\end{theorem}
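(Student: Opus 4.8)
The plan is to translate the combinatorial content of Definitions \ref{definition1} and \ref{definition2} into a single generic-rank computation. Since $\lS \bA_k \ \bB_k \rS$ is square and invertible, $\bB_k$ has full column rank $n-m$, so each $ran(\bB_k)$ is an $(n-m)$-dimensional subspace of $\C^n$. The key reduction is to realize each such subspace as a kernel: for every $k$ I would pick a full-row-rank matrix $\bC_k\in\C^{m\times n}$ with $\bC_k\bB_k=\bzero$ (its rows span the $m$-dimensional left null space of $\bB_k$), so that $ran(\bB_k)=null(\bC_k)$ because the dimensions $n-m$ match. Then for any index set $\ls \pi_1,\ldots,\pi_{\gamma} \rs$,
\[
\bigcap_{i=1}^{\gamma} ran(\bB_{\pi_i}) \;=\; null(\bC), \qquad
\bC \;=\; \lS \bC_{\pi_1}^T \ \cdots \ \bC_{\pi_{\gamma}}^T \rS^T \in \C^{\gamma m \times n},
\]
whence the intersection dimension is exactly $n-\rank(\bC)$.

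Next I would show $\rank(\bC)=\min(\gamma m, n)$ almost surely. Each $\bC_k$ depends only on $\bB_k$, so the stacked blocks are mutually independent and continuously distributed; the rank-deficiency locus of a $\gamma m\times n$ matrix is the common zero set of its maximal minors, a proper algebraic subvariety, hence of measure zero once a single configuration (e.g.\ blocks supported on disjoint coordinates) is exhibited attaining full rank. This is precisely the full-rank-almost-surely statement of Lemma \ref{lemma_linear_independent_columns}, which I would invoke directly. Consequently, for \emph{every} $\gamma$-subset,
\[
\dim\lp \bigcap_{i=1}^{\gamma} ran(\bB_{\pi_i}) \rp \;=\; \max\lp 0,\ n-\gamma m \rp
\]
almost surely; the fact that this value is the same for all subsets is what makes the quantifier ``all $\gamma$-tuple combinations'' in Definition \ref{definition1} well posed.

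With this formula in hand the remainder is arithmetic. Reading ``nonempty intersection'' as nonzero (the origin lies in every intersection), the defining property of $\gamma$ is that every $\gamma$-subset satisfies $n-\gamma m\geq 1$ while, whenever $\gamma+1\leq K$, some $(\gamma+1)$-subset gives $n-(\gamma+1)m\leq 0$. The largest integer with $n-\gamma m\geq 1$ is $\lfloor (n-1)/m\rfloor$, and I would insert the short identity $\lfloor (n-1)/m\rfloor=\lceil (n-m)/m\rceil$ (proved by splitting on whether $m$ divides $n-m$). Intersecting with the trivial bound $\gamma\leq K$ yields $\gamma=\min\lp\lceil (n-m)/m\rceil, K\rp$; maximality holds because $\gamma=\lceil (n-m)/m\rceil$ forces $(\gamma+1)m\geq n$ and hence trivial larger intersections, while the constraint is vacuous when $\gamma=K$. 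Then $\mu=n-\gamma m$ is immediate, with $\mu\geq 1$ from $\gamma m\leq n-1$ and $\mu\leq m$ from $\gamma\geq (n-m)/m$ in the antenna-limited regime $K\geq\lceil (n-m)/m\rceil$.

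The main obstacle is the almost-sure genericity step. Because the auxiliary matrices $\bC_k$ are not canonically determined by $\bB_k$, I must argue that they can be chosen so as to preserve independence across $k$ and continuous distribution, and, crucially, that a single explicit full-rank witness exists so that the bad set is genuinely of measure zero rather than the whole parameter space. Everything else --- the kernel reduction, the dimension count $n-\rank(\bC)$, and the floor/ceiling identity --- is routine.
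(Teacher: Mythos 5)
Your proof is correct, but it reaches the formula by a genuinely different route than the paper. You compute the intersection dimension in one shot: writing each $ran(\bB_k)$ as the kernel of an $m\times n$ matrix $\bC_k$ (and since $ran(\bB_k)=null(\bA_k^*)$ here, one may simply take $\bC_k=\bA_k^*$, which dissolves your worry about the non-canonical choice of $\bC_k$ — only the row space matters for the rank of the stack, and that row space is canonical), stacking them, and invoking generic full rank to get $\dim\bigl(\bigcap_i ran(\bB_{\pi_i})\bigr)=\max(0,\,n-\gam m)$ almost surely; the floor/ceiling arithmetic then finishes the claim. The paper instead builds the intersection \emph{constructively} by the recursion $\bGam_i=\bGam_{i-1}\bZ_i$ with $\bZ_i$ an orthonormal basis of $null(\bA_i^*\bGam_{i-1})$, peeling off $m$ dimensions per step and stopping when $n-\gam m\geq 1$ fails. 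The two arguments lean on the same genericity fact — the paper implicitly assumes $\bA_i^*\bGam_{i-1}$ has full rank $m$ at every step (so each $\bZ_i$ has exactly $n-im$ columns), which is the recursive form of your stacked-matrix rank claim; you make this assumption explicit and isolate it as the one nontrivial step, whereas the paper leaves it unstated. What the paper's route buys is the explicit iterative formula \eqref{ap3.2}, which is reused in Step~2 of the generalized null space interference alignment to actually compute the intersection null space $\bN_{m,\barm}^{(k)}$; your route buys a cleaner and more transparent dimension count but is non-constructive. One shared loose end: the bound $\mu\leq m$ only holds when $\gam=\lceil (n-m)/m\rceil$ rather than when the minimum is attained at $K$; you flag this regime restriction, the paper does not.
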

\begin{proof}
See Appendix \ref{appendix_theorem_geometric_multiplicity}.
\end{proof}

The scheme requires different pairs of $M$ and $N$
depending on the size of the overlapped interference null space dimension in order to preserve $\beta$ interference free dimensions per user.
We elaborate the framework for the two-cell case and the scheme is directly extended to the $L>2$ cell case,
which is provided in Appendix \ref{appendix_NIA_L_geq_2}.

For the two-cell case, given $K$ out-of-cell interference channels $\big\{ \bH_{m,\barm k} \big\}_{k\in\cK}$ with $\bH_{m,\barm k}\in\C^{N\times M}$
and corresponding null space $\ls \bN_{m,\barm k}\rs_{k\in\cK}$ where $\bN_{m,\barm k}\in\C^{N\times (N-M)}$ such that
$\lS \bH_{m, \barm k} \ \bN_{m,\barm k} \rS$ is full rank,
$\gam$ of $\big\{ \bH_{m,\barm k} \big\}_{k\in\cK}$ is given by
\beq
\gam\=\min\lp \left\lceil \frac{N-M}{M} \right\rceil, K \rp \nonumber
\eeq
by Theorem \ref{theorem_geometric_multiplicity}.
Since $N>M$, $\gam$ is bound by $1\leq \gam \leq K$.
The generalized null space interference alignment scheme is described by determining required $M$ and $N$ for a given value of $\gam$ ($1\leq \gam\leq K$)
such that the scheme can resolve $\beta$ interference free dimensions per users.

Under the zero out-of-cell interference constraint, given $\bP_m\in\C^{K\beta\times N}$, the precoder $\bT_{\barm k}\in\C^{M\times \beta}$ must lie in the
null space of $\bP_m\bH_{m,\barm k}$, i.e., $\span(\bT_{\barm k})\subset null(\bP_m\bH_{m,\barm k})$ for $k\in\cK$.
The condition $\span(\bT_{\barm k})\subset null(\bP_m\bH_{m,\barm k})$
is accomplished if
\beq
\dim\lp null(\bP_m\bH_{m,\barm k}) \rp\geq \beta, \ k \in\cK. \label{5.1.0}
\eeq
With the equality
$\dim\lp null\lp \bP_m\bH_{m,\barm k} \rp \rp = M - \rank\lp \bP_m\bH_{m,\barm k} \rp$ for $k\in \cK$,
we have
\beq
M \geq \rank\lp \bP_m\bH_{m,\barm k} \rp + \beta, \ k \in \cK. \label{5.3}
\eeq
The formula \eqref{5.3} implies that in order to accomplish the zero out-of-cell interference, we need
$\rank\lp \bP_m\bH_{m,\barm k} \rp < M$, $k \in \cK$
with $N>M$, while $\rank\lp \bP_m\bH_{m,\barm k} \rp \leq \min(K\beta, M)$, implying
\beq
\rank\lp \bP_m\bH_{m,\barm k} \rp \leq K\beta. \label{5.4}
\eeq

Given the $\gam$, the feasible $\bP_m\in \C^{K\beta\times N}$ and the antennas dimensions $N$ and $M$
that satisfies \eqref{5.3} can be designed by assigning $\gam$-overlapped intersection null spaces
of some groups of out-of-cell interference channels to the row subspace of $\bP_m$.

\textbf{Step1}: Let us define $k$th $\gam$-tuple index set as $\Pi_{k}=\ls \pi_i \rs_{i=k}^{\gam\+k\-1}$ for $k\in\cK$
with
\beq
\pi_i\=\lp (i-1) \ mod \ K  \rp\+1. \label{5.4.1}
\eeq
For instance, when $\gam=2$, $K=3$, and $L=2$, index group
$\ls \Pi_k \rs_{k=1}^3$ is composed of $\Pi_1=\ls 1,2 \rs$, $\Pi_2=\ls 2,3 \rs$, and $\Pi_3=\ls 3,1 \rs$.
The defined index group $\ls \Pi_k \rs_{k=1}^K$
ensures that every index in $\cK$ appears $\gam$ times throughout $K$ distinct sets.

\textbf{Step2}: Define the intersection null space associated with channel indices in $\Pi_k$ as
$\bN_{m,\barm}^{(k)}\in\C^{N\times \mu}$, i.e.,
\beq
\span\lp \bN_{m,\barm}^{(k)} \rp \subset \bigcap_{i=k}^{\gam+k-1} ran\lp \bN_{m, \barm \pi_i} \rp. \nonumber
\eeq
For $\{ \bH_{m,\barm i} \}_{i\in\Pi_k}$, the $\mu$-dimensional intersection null space $\bN_{m,\barm}^{(k)}$ is
efficiently found by using the iterative formula in \eqref{ap3.2} in Appendix \ref{appendix_theorem_geometric_multiplicity}.

\textbf{Step3}: When $1\leq \gam \leq K-1$, $\bN_{m,\barm}^{(k)}$ is found such that $\mu=\beta$ and
the row subspace of $\bP_m\in\C^{K\beta\times N}$ is constructed by
\beq
\bP_m \= \bPi \lS \bN_{m,\barm}^{(1)} \  \bN_{m,\barm}^{(2)} \ \cdots \bN_{m,\barm}^{(K)} \rS^* \label{5.8}
\eeq
where $\bPi\in\C^{K\beta\times K\beta}$ is a full rank matrix.
From Theorem \ref{theorem_geometric_multiplicity}, the existence of $\bN_{m,\barm}^{(k)}$ with $\mu=\beta$
is guaranteed if $N\=\gam M \+ \beta$.
When $\gam=K$, there exists only one intersection null space $\bN_{m,\barm}^{(1)}$ such that
$\span( \bN_{m,\barm}^{(1)} ) \subset \bigcap\limits_{k=1}^{K} ran\lp \bN_{m,\barm k} \rp$. In this case,
$\mu$ of $\bN_{m,\barm}^{(1)}$ is set to $\mu=K\beta$ and
\beq
\bP_m =  \bPi \bN_{m,\barm}^{(1)^{\scriptstyle *}}.  \label{5.8.1}
\eeq
The result in \eqref{5.8.1} is possible when $N=\gam M+K\beta$.

\textbf{Step4}: Given $N$ formulated in \textbf{Step3}, we now formulate the required dimension $M$.
The $\bP_m$ in \eqref{5.8} and \eqref{5.8.1} always contains $\gam\beta$-dimensional subspace
that is lying in the null space of $\bH_{m,\barm k}$ for all $k\in\cK$.
Thus, the projected out-of-cell interference channels $\ls \bP_m\bH_{m,\barm k} \rs_{k\in\cK}$ satisfies
\beq
\rank\lp \bP_m\bH_{m,\barm k} \rp = (K-\gam)\beta, \ k\in\cK. \label{5.8.2}
\eeq
Plugging \eqref{5.8.2} in \eqref{5.3}, the $M$ ensuring the zero out-of-cell interference constraint in \eqref{5.1.0}
yields
\beq
M = (K-\gam)\beta + \beta. \label{5.9}
\eeq

When $L>2$, the generalized null space interference alignment is presented in Appendix \ref{appendix_NIA_L_geq_2} which
utilizes channel aggregation.
The same decodability argument used in Section \ref{section4.1} can be applied for $L\geq 2$.
To avoid repetition we omit this part.

Now, given $\gam$ and $\beta$, the required $M$ for $L\geq 2$ is
\beq
M = (L-1)(K-\gam)\beta + \beta. \label{5.12.1}
\eeq
Then, the dimension $N$ to resolve $\beta$ interference free dimensions is given by
\beq
N = (L-1)\gam M + \beta \ \ \text{if} \ \ 1\leq \gam \leq K-1 \label{5.13}
\eeq
and
\beq
N = (L-1)\gam M + K\beta \ \ \text{if} \ \ \gam=K. \label{5.14}
\eeq


It can now be observed that the developed generalized framework includes the achievable schemes in
Theorem \ref{theorem_Two_cell_MAC_optimal_dof} and Theorem \ref{theorem_RX_ZF_optimal_dof}, i.e.,
when $\gam=1$, the generalized null space interference alignment attains the optimal degrees of freedom
for two-cell case and when $\gam=K$, the scheme shows the optimal degrees of freedom for $L\geq 2$.
For $2\leq \gam \leq K-1$, it does not necessarily achieve
the optimal degrees of freedom, rather it provides $\beta$ interference-free dimensions per user, i.e.,
it provides a total $KL\beta$ degrees of freedom.

Recently, a necessary condition for a linear achievable scheme providing one interference free
dimension per user (i.e., $\beta=1$) for $L$-cell and $K$-user MIMO network is characterized as \cite{Honig1}
\beq
M+N \geq LK+1. \label{5.15}
\eeq
This condition indicates that no linear scheme can provide even one interference free dimension per user,
if $M+N<LK+1$. In addition, the crucial metric $M+N$ in \eqref{5.15} measures the redundancy
in $M$ and $N$ to provide the $\beta=1$ interference free dimension per user.

\begin{remark}
Generalized null space interference alignment with $\beta=1$ always satisfies the necessary condition
$M+N\geq LK+1$. Moreover, the linear schemes in Theorem \ref{theorem_Two_cell_MAC_optimal_dof} and Theorem \ref{theorem_RX_ZF_optimal_dof}
achieve the optimal degrees of freedom with the minimum possible $M+N=LK+1$.
\end{remark}
\section{Leveraging Multiuser Diversity for $L$-cell Downlink MIMO Interference Channel} \label{section6}
We have argued the optimal spatial degrees of freedom and the generalized null space interference alignment scheme
with constant channel coefficients.
Allocating spatial resources across multiple users in the network is another dimension that has the
potential to provide additional spatial degrees of freedom with only a small amount of CSI feedback.

In this section, the degrees of freedom of the $L$-cell single-input
multiple-output (SIMO) downlink MIMO system by exploiting multiuser diversity is studied.
Thus, we consider the downlink channel model in \eqref{channel_model_downlink}.
We are particularly interested in a downlink receive beamforming system using $\beta=1$ stream transmission.

We look at an example where each transmitter has $M=1$ antennas and each receiver is equipped with $N=L-1$ antennas.
There is a total of $K$ users in each cell.
In order to exploit multiuser diversity, the user having the best channel is selected in the cell.
Notice that after the user selection, the network is reduced to an $L$-cell SIMO interference channel.
We first introduce the user selection strategy and characterize the instantaneous degrees of freedom and
average degrees of freedom as introduced in Section \ref{subsection2.2} and \ref{subsection2.3}.

\subsection{User Scheduling Framework}
Initially, $L$ basestations simultaneously transmit training symbols
$s_1, \ldots, s_L$ to all users in the network where $s_{\ell}\in\C^{1\times 1}$.
Then, the channel output vector at user $km$ is expressed by
\beq
\by_{km} = \bh_{km,m} s_{m} + \sum_{\ell\neq m}^{L} \bh_{km, \ell} s_{\ell} + \bn_{km} \label{6.1}
\eeq
where $\by_{km}$ and $\bn_{km}$ are the $(L-1) \times 1$ received vector and noise vector.

We assume that channel vectors in $\ls \bh_{km,\ell} \rs_{\ell,m\in\cL, k\in\cK}$
are mutually independent and realized so that each entry of $\bh_{km,\ell}$ is an i.i.d. zero mean
complex Gaussian random variable with unit variance, i.e., $\cC\cN(\bzero, \bI_{L-1})$.
The training symbol (or data symbol after the training phase) satisfies the average power constraint $E[ | s_m |^2 ]= \rho$.
The symbols are independently generated with $E\lS s_m s_{\ell}^* \rS=\rho$ for $m=\ell$ and zero otherwise.

The addressed user scheduling scheme does not assume global channel knowledge at all nodes;
in contract, user $km$ only has knowledge about its own channel $\bh_{km,m}$ and the
covariance matrix of the out-of-cell interference defined as
\beq
E\!\lS\! \sum\limits_{\ell\neq m}^{L} \bh_{km, \ell} s_{\ell} \lp \!\sum\limits_{\ell\neq m}^{L} \bh_{km, \ell} s_{\ell} \!\rp^*  \rS
\=\rho\sum\limits_{\ell\neq m}^{L} \bh_{km, \ell} \bh_{km,\ell}^*. \label{2.1.1}
\eeq
Thus, the scheme only requires local CSI, which significantly decreases the amount of CSI compared to
conventional interference alignment \cite{Jafar2, Jafar1, Cadambe1, Cadambe2, Gou,Suh}.

Denote the out-of-cell interference covariance matrix at user $km$  (i.e., the matrix in  \eqref{2.1.1})
as $\rho\bW_{km}$ meaning that $\rho\bW_{km}=\rho\sum\limits_{\ell\neq m}^{L} \bh_{km, \ell} \bh_{km,\ell}^*$.
Then, user $km$ selects a receive beamforming vector $\bp_{km}\in\C^{(L-1) \times 1}$
to maximize the signal to noise plus interference ratio (SINR) according to
\beq
\bp_{km} = \argmax_{\bp \in \C^{(L-1)\times 1}} \frac{\rho \la \bp^*\bh_{km,m} \ra^2}{\lA \bp \rA_2^2+ \rho\bp^*\bW_{ km}\bp }. \label{2.1.2}
\eeq
The solution to \eqref{2.1.2} is $\bp_{km}=\bv_{max,km}$ where $\bv_{max,km}$ is the
eigenvector associated with the largest eigenvalue of $\lp \bI_N + \rho \bW_{km} \rp^{-1}\rho\bh_{km,m}\bh_{km,m}^*$ meaning that
\beq
\lambda_{max,km} &=& \lambda_{max}\lp \lp \bI_N + \rho\bW_{km} \rp^{-1} \rho\bh_{km,m}\bh_{km,m}^* \rp \nonumber \\
&=& \frac{\rho \la \bp_{km}^*\bh_{km,m} \ra^2}{\lA \bp_{km} \rA_2^2+ \rho\bp_{km}^*\bW_{ km}\bp_{km} } \label{2.1.3}
\eeq
where $\lambda_{max}(\bA)$ returns the dominant eigenvalue of matrix $\bA$.

Users associated with transmitter $m$ feed back $\{ \lambda_{max,km} \}_{k\in\cK}$ through the
feedback link to transmitter $m$. Then, transmitter $m$ selects the best user such that
\beq
\hk m = \argmax_{k\in\cK} \lambda_{max,km}. \label{2.1.4}
\eeq

After the user selection, data symbols are transmitted to serve the selected $L$ users $\{ \hk m \}_{m\in\cL}$
from each base station in a cell.
Overall, the system reduces to an $L$-cell SIMO interference channel.

Passing the received signal vector at the selected user $\hk m$ through
the receive processing filter $\bp_{\hk m}$ yields
\beq
\bp_{\hk m}^* \by_{\hk m} \= \bp_{\hk m}^*\bh_{\hk m,m} s_{m} \+ \sum_{\ell\neq m}^{L} \bp_{\hk m}^*\bh_{\hk m, \ell} s_{\ell}
\+ \bp_{\hk m}^*\bn_{\hk m}, \label{2.1.5}
\eeq
and the instantaneous rate at user $\hk m$ is written as
\beq
{{R}}_{\hk m}(\rho) \!\=   \log\!\lp\!\! 1\+ \frac{\rho \la \bp_{\hk m}^*\bh_{\hk m, m} \ra^2}
{\lA \bp_{\hk m} \rA_2^2+\rho  \bp_{\hk m}^* \bW_{\hk m}\bp_{\hk m} }\!\rp. \label{2.1.6}
\eeq
Notice that
\beq
{{R}}_{\hk m}(\rho) = \max_{k\in\cK} {{R}}_{k m}(\rho).  \label{2.1.6.1}
\eeq

\subsection{Instantaneous Degrees of Freedom Analysis} \label{section6.1}
The approach taken to analyze the instantaneous degrees of freedom is to derive
a tractable inner bound and outer bound of the instantaneous degrees of freedom and
show that two bounds converge to the same quantity.
For this purpose, we first consider the inner bound scheme.

Given $(L-1)$-dimensional channel output vector,
user $km$ of the inner bound scheme selects receive processing vector $\tbp_{km}\in\C^{(L-1)\times 1}$
only to minimize the out-of-cell interference power such that
\beq
\tbp_{km}=\argmin_{\bp\in\C^{(L-1)\times 1}} \bp^* \bW_{km} \bp. \label{2.1.8}
\eeq
The minimizer in \eqref{2.1.8} is $\tbp_{km}=\bu_{min,km}$ where $\bu_{min,km}$ is the eigenvector
associated with the smallest eigenvalue of $\bW_{km}$, i.e.,
\beq
\sigma_{km}=\lambda_{min}\lp \bW_{km} \rp. \label{2.1.8.1}
\eeq
Users registered to transmitter $m$ feed back interference statistics $\ls \sigma_{km} \rs_{k\in\cK}$
through the feedback link to transmitter $m$.
Then, transmitter $m$ picks the best user such that
\beq
\hk m = \argmin_{k\in \cK} \sigma_{km} \label{2.1.9}
\eeq
where the scheduler in \eqref{2.1.9} is namely the minimum interference power scheduler.
After  post processing with $\tbp_{\hk m}$ in \eqref{2.1.8} at the receiver,
the achievable rate of the inner bound scheme is
\beq
{\tilde{R}}_{\hk m}(\rho) \!\=   \log\!\lp\!\! 1\+ \frac{\rho \la \tbp_{\hk m}^*\bh_{\hk m, m} \ra^2}
{\lA \tbp_{\hk m} \rA_2^2+\rho \tbp_{\hk m}^*\bW_{\hk m} \tbp_{\hk m} } \!\rp. \label{2.1.9.1}
\eeq

Obviously, the sum rate $\sum\limits_{m=1}^{L}\tilde{R}_{\hk m}(\rho)$ obtained
by the inner bound scheme is a lower bound of $\sum\limits_{m=1}^{L}{R}_{\hk m}(\rho)$
in \eqref{2.1.6} which is based on the maximum SINR scheduling in \eqref{2.1.4}.
The following lemma establishes the convergence law for the interference power in \eqref{2.1.8.1} which will
play a key role for showing the main result of this section.

\begin{lemma}\label{lemma_mean_square_conv}
If $\rho, K\rightarrow \infty$ while maintaining $K \varpropto \rho^a$ with $a>1$ and $a\in\R$, then
\beq
\rho \tbp_{\hk m}^* \bW_{\hk m} \tbp_{\hk m} = \rho\sigma_{\hk m}
\stackrel{\substack{m.s \\ a.s.}}{\longrightarrow}
0 \label{lemma_mean_square_conv_equation}
\eeq
in mean-square (m.s.) and almost sure (a.s.) sense.
\end{lemma}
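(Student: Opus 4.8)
The plan is to turn the statement into a problem about extreme order statistics of independent random matrices. For a fixed cell $m$, the interference covariance $\bW_{km}=\sum_{\ell\neq m}^{L}\bh_{km,\ell}\bh_{km,\ell}^*$ can be written as $\bH_{km}\bH_{km}^*$ with $\bH_{km}=\lS \bh_{km,1} \cdots \bh_{km,L} \rS$ (columns $\ell\neq m$) a \emph{square} $(L-1)\mul(L-1)$ matrix of i.i.d.\ $\cC\cN(0,1)$ entries, so $\sigma_{km}=\lambda_{min}(\bW_{km})$ is the smallest eigenvalue of a square complex Wishart matrix. Because the channel vectors are mutually independent, the family $\ls \sigma_{km} \rs_{k\in\cK}$ is i.i.d.\ across $k$, and the scheduler in \eqref{2.1.9} returns $\sigma_{\hk m}=\min_{k\in\cK}\sigma_{km}$. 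Thus everything reduces to controlling the minimum of $K$ i.i.d.\ copies of $\sigma_{km}$ as both $K$ and $\rho$ grow.

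The key input, and the step I expect to be the main obstacle, is the law of $\sigma_{km}$ near the origin. Using the $\beta\=2$ Laguerre joint eigenvalue density $\varpropto \prod_{i<j}(\lambda_i-\lambda_j)^2\prod_i e^{-\lambda_i}$ for the square ensemble and integrating out the larger eigenvalues, one finds the exact law $\sigma_{km}\sim \mathrm{Exp}(L-1)$ (for $L-1\=1$ this is just $\sigma_{km}=\la \bh \ra^2\sim\mathrm{Exp}(1)$; for $L-1\=2$ the Vandermonde integral collapses to density $2e^{-2x}$). I would invoke the classical closed-form law for the smallest eigenvalue of a square complex Wishart matrix rather than rederive it; all that is truly needed downstream is that the CDF grows linearly at $0$. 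Consequently $\sigma_{\hk m}$, as the minimum of $K$ independent $\mathrm{Exp}(L-1)$ variables, is exactly $\mathrm{Exp}\lp (L-1)K \rp$.

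Mean-square convergence is then immediate: since $E\lS \sigma_{\hk m}^2 \rS = 2/\lp (L-1)K \rp^2$,
\beq
E\lS \lp \rho\sigma_{\hk m} \rp^2 \rS = \frac{2\rho^2}{(L-1)^2 K^2} = O\lp \rho^{2(1-a)} \rp \longrightarrow 0 \nonumber
\eeq
because $K\varpropto\rho^a$ with $a>1$, which is \eqref{lemma_mean_square_conv_equation} in the m.s.\ sense. (Even without the exact exponential law, a linear lower bound $F(x)\geq c_1 x$ near $0$ plus a geometric tail bound for $x$ bounded away from $0$ would yield the same $O(K^{-2})$ estimate after splitting $\int_0^\infty 2x(1-F(x))^K\,dx$.)

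For almost-sure convergence I would argue along the integer index $K=1,2,\ldots$ with $\rho=\rho_K\varpropto K^{1/a}$, on which $\ls \sigma_{\hk m} \rs$ is a nested minimum. For every $\ep>0$,
\beq
\Pr\lp \rho\sigma_{\hk m} > \ep \rp = \Pr\lp \sigma_{\hk m} > \ep/\rho \rp = e^{-(L-1)K\ep/\rho} = e^{-(L-1)\ep\, c\, K^{1-1/a}} \nonumber
\eeq
for a constant $c>0$ absorbing the proportionality $K\varpropto\rho^a$ (so that $K/\rho \varpropto K^{1-1/a}$). Since $1-1/a>0$, these stretched-exponential probabilities are summable in $K$, so $\sum_K \Pr\lp \rho\sigma_{\hk m} > \ep \rp < \infty$; Borel--Cantelli then forces $\rho\sigma_{\hk m} > \ep$ only finitely often with probability one, and intersecting over $\ep\in\ls 1/j \rs_{j\geq 1}$ gives $\rho\sigma_{\hk m}\to 0$ a.s. Recalling from \eqref{2.1.8}--\eqref{2.1.8.1} that $\rho\tbp_{\hk m}^*\bW_{\hk m}\tbp_{\hk m}=\rho\sigma_{\hk m}$ completes both modes of convergence.
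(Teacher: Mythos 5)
Your proposal is correct and rests on the same pillars as the paper's proof: the reduction of $\sigma_{\hk m}$ to the minimum of $K$ i.i.d.\ smallest eigenvalues of square complex Wishart matrices, the closed-form exponential law $f(\sigma)=(L-1)e^{-(L-1)\sigma}$ (the paper cites Edelman for this, exactly the fact you propose to invoke), the resulting $\mathrm{Exp}((L-1)K)$ law for the minimum, and the direct computation $E[(\rho\sigma_{\hk m})^2]=2\rho^2/((L-1)^2K^2)\rightarrow 0$ for the mean-square part. The one place you genuinely diverge is the almost-sure step. The paper writes $\Pr(\lim_{\rho,K\rightarrow\infty}\rho\sigma_{\hk m}>\epsilon)=\lim_{\rho,K\rightarrow\infty}e^{-(L-1)K\epsilon/\rho}=0$, i.e.\ it passes the limit inside the probability without justification; taken literally, that computation establishes only convergence in probability. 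Your route --- parametrizing the joint limit along $K=1,2,\ldots$ with $\rho_K\varpropto K^{1/a}$, noting that the tail probabilities $e^{-(L-1)\epsilon\,c\,K^{1-1/a}}$ are summable because $1-1/a>0$, and applying Borel--Cantelli before intersecting over $\epsilon=1/j$ --- is the standard rigorous argument and actually repairs this soft spot in the paper's own proof. The cost is that you must fix a discretization of the joint limit, but that is inherent in any a.s.\ statement about a two-parameter limit, and your reading is the natural one.
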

\begin{proof}
First, notice that random variable $\min\limits_{k\in\cK} \sigma_{km}$ in \eqref{2.1.9}
is the minimum order statistic of i.i.d. $K$ minimum eigenvalues
of Wishart matrices $\bW_{1 m}, \ldots,  \bW_{K m}$ where
$\bW_{k m}= \bY_{km}\bY_{km}^*$ with $(L\-1)\times (L\-1)$ dimensional
$\bY_{km}=\lS \bh_{k m, 1} \cdots \bh_{k m, m\-1} \ \bh_{k m, m\+1} \cdots \bh_{k m, L}\rS$.
It was shown in \cite{Edelman} the probability density function (PDF) of the minimum eigenvalue
of Wishart matrix with $(L\-1)\times (L\-1)$ dimensional $\bY_{km}$ is given by $f(\sigma) = (L-1) e^{-(L-1)\sigma}$.
Thus, the PDF of $\rho\sigma_{km}$ is
\beq
f(\rho\sigma) = \frac{L-1}{\rho} e^{-\frac{L-1}{\rho}\sigma} \label{2.1.10}.
\eeq
From \eqref{2.1.10}, the complementary cumulative distribution function (CCDF) of $\rho\sigma_{k m}$ is derived as
$\Pr\lp \rho\sigma > x \rp \= e^{-\frac{L-1}{\rho}x}$.
Then, CCDF of $\rho\sigma_{\hk m}$ is
\beq
\Pr\!\lp \rho\sigma_{\hk m} \!>\! x \rp \=\lp \Pr\!\lp \rho\sigma > x \rp \rp^K \=  e^{-\frac{(L-1)K}{\rho}x}.\label{2.1.11}
\eeq
We first show the almost sure (a.s.) convergence and the argument for the mean-square (m.s.) convergence follows.

\subsubsection{Almost Sure Convergence}
For $\forall \ep >0$, as $\rho,K \rightarrow \infty$ in such a way that $K\propto \rho^a$ with $a>1$,
we have from \eqref{2.1.11}
\beq
\d4\Pr\!\lp \lim_{\rho,K \!\rightarrow \infty}\! \rho\sigma_{\hk m} \!> \ep \!\rp
\!\d4&=&\d4\! \lim_{\rho,K\rightarrow \infty}\!e^{-\frac{(L-1)K}{\rho}\ep} \nonumber \\
\!\d4&=&\d4\! \lim_{\rho,K\rightarrow \infty} e^{-(L-1)\rho^{a-1}\ep} \= 0. \nonumber
\eeq
Since this holds for arbitrarily small $\ep>0$, this implies
\beq
\Pr\lp \lim_{\rho,K\rightarrow \infty} \rho\sigma_{\hk m} \= 0 \rp
\= 1 \- \lim_{\ep\rightarrow 0}\Pr\lp \lim_{\rho,K\rightarrow \infty} \rho\sigma_{\hk m} \! > \! \ep \rp \= 1 \nonumber
\eeq
with probability one.

\subsubsection{Mean-square Convergence}
To show \eqref{lemma_mean_square_conv_equation} in mean-square sense, we need to first calculate quantities
$\lim\limits_{\rho,K\rightarrow\infty} E\lS \rho\sigma_{\hk m} \rS$ and
$\lim\limits_{\rho,K\rightarrow\infty} E\lS \rho^2\sigma_{\hk m}^2 \rS$.
The expectation of $ \rho\sigma_{\hk m}$  is simplified by
\beq
E\lS \rho\sigma_{\hk m} \rS
\d4&=&\d4 \int_{0}^{\infty} \lp \Pr\lp \rho\sigma > x \rp \rp^{K} dx   \nonumber \\
\d4&=&\d4 \frac{\rho}{(L-1)K}. \label{2.1.13}
\eeq
Then, $E\lS \lp \rho\sigma_{\hk m} \rp^2 \rS$ is formulated as
\beq
E\lS \lp \rho\sigma_{\hk m} \rp^2 \rS
\d4&=&\d4 E\lS \int_0^{\rho\sigma_{\hk m}} 2x dx \rS \nonumber \\
\d4&=&\d4 2\lp \frac{\rho}{(L-1)K} \rp^2 \label{2.1.16}
\eeq
where \eqref{2.1.16} is obtained by integration by parts.

Consequently, from \eqref{2.1.13} and \eqref{2.1.16}, as $\rho,K\rightarrow \infty$ while
maintaining $K\varpropto\rho^a$ with $a>1$,
the variance of $\rho\sigma_{\hk m}$, i.e.,
$\lim\limits_{\rho,K\rightarrow\infty} \! \lp \! E\!\lS \! \rho^2\sigma_{\hk m}^2 \!\rS \- E\!\lS \rho\sigma_{\hk m} \rS^2  \rp$ converges
\beq
\lim_{\rho,K\rightarrow\infty}\lp \frac{\rho^{1-a}}{L\-1} \rp^2 = 0. \nonumber
\eeq
This establishes
\beq
\lim_{\rho,K\rightarrow\infty} E\lS \la \rho\sigma_{\hk m} - E\lS \rho\sigma_{\hk m} \rS \ra^2 \rS = 0
\eeq
implying $\rho\sigma_{\hk m}\stackrel{m.s.}{\longrightarrow} 0$.
\end{proof}

Lemma \ref{lemma_mean_square_conv} readily characterize the convergence of the total degrees of freedom as follows.

\begin{theorem} \label{theorem_mean_suare_conv}
If the number of users $K$ in a cell increases faster than linearly with $\rho$, i.e., $\rho, K\rightarrow \infty$
in such a way that $K\varpropto\rho^a$ for $a>1$ and $a\in\R$,
the instantaneous degrees of freedom in \eqref{2.7} converges as
\beq
\lim_{\rho,K\rightarrow\infty}\frac{\sum\limits_{m=1}^{L}R_{\hk m}}{\log(\rho)} \stackrel{\substack{m.s.\\a.s.}}{=} L \label{theorem_mean_suare_conv_equation}
\eeq
where $M=1$ and $N= L\-1$.
\end{theorem}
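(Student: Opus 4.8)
The plan is to squeeze the instantaneous degrees of freedom $\sum_{m=1}^L R_{\hk m}/\log(\rho)$ between a tractable inner bound and a tractable outer bound, each converging to $L$, and to carry out this squeeze simultaneously in the mean-square and almost-sure senses. The inner bound is the sum rate $\sum_m \tilde{R}_{\hk m}(\rho)$ of the minimum interference power scheme in \eqref{2.1.9.1}, which is a lower bound on $\sum_m R_{\hk m}(\rho)$; the outer bound is obtained by discarding the out-of-cell interference altogether. Because $R_{\hk m}(\rho)=\max_{k\in\cK}R_{km}(\rho)$ by \eqref{2.1.6.1}, it suffices to treat each cell $m$ separately and show that both surrogates for $R_{\hk m}(\rho)/\log(\rho)$ tend to $1$.

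For the inner bound I take $\tbp_{\hk m}$ of unit norm, so the denominator in \eqref{2.1.9.1} equals $1+\rho\sigma_{\hk m}$ with $\sigma_{\hk m}=\lambda_{min}(\bW_{\hk m})$. Lemma \ref{lemma_mean_square_conv} already gives $\rho\sigma_{\hk m}\to 0$ (m.s.\ and a.s.) when $K\propto\rho^a$, $a>1$, so the denominator converges to $1$. The decisive observation for the numerator is that the scheduler \eqref{2.1.9} and the beamformer $\tbp_{\hk m}=\bu_{min,\hk m}$ are functions of the interference channels $\ls \bh_{\hk m,\ell} \rs_{\ell\neq m}$ only, hence independent of the desired channel $\bh_{\hk m,m}$. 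By rotational invariance of the isotropic Gaussian, $\tbp_{\hk m}^*\bh_{\hk m,m}\sim\cC\cN(0,1)$, so $X:=\la \tbp_{\hk m}^*\bh_{\hk m,m} \ra^2$ is a unit-mean exponential variable whose law does not depend on $\rho$ or $K$. Then
\[
\frac{\tilde{R}_{\hk m}(\rho)}{\log(\rho)}=\frac{\log\lp 1+\frac{\rho X}{1+\rho\sigma_{\hk m}}\rp}{\log(\rho)}=1+\frac{\log\lp X+o(1)\rp}{\log(\rho)}+o(1),
\]
and since $X$ is of order one with $E\lS \la\log X\ra^2\rS<\infty$, the right-hand side tends to $1$ both in m.s.\ and a.s.\ sense.

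For the outer bound I drop the interference and invoke Cauchy--Schwarz, $\la \bp^*\bh_{km,m} \ra^2/\lA\bp\rA_2^2\leq \lA\bh_{km,m}\rA_2^2$, to get $R_{km}(\rho)\leq\log(1+\rho\lA\bh_{km,m}\rA_2^2)$ and therefore $R_{\hk m}(\rho)\leq\log\lp 1+\rho\max_{k\in\cK}\lA\bh_{km,m}\rA_2^2 \rp$. The quantity $\max_{k\in\cK}\lA\bh_{km,m}\rA_2^2$ is the maximum of $K$ i.i.d.\ chi-square variates, whose multiuser-diversity growth is logarithmic, $\max_k\lA\bh_{km,m}\rA_2^2=O(\log K)=O(a\log(\rho))$ with $O(1)$ fluctuations. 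Consequently $\log(1+\rho\max_k\lA\bh_{km,m}\rA_2^2)=\log(\rho)+O(\log\log(\rho))$, so dividing by $\log(\rho)$ yields a bound tending to $1$; standard extreme-value tail estimates for exponentially tailed variables upgrade this to m.s.\ and a.s.\ convergence.

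Combining the two bounds, each $R_{\hk m}(\rho)/\log(\rho)$ is sandwiched between quantities converging to $1$, and summing over the $L$ cells gives \eqref{theorem_mean_suare_conv_equation}. The a.s.\ statement is immediate from the sandwich; for the m.s.\ statement I write $0\leq R_{\hk m}/\log(\rho)-\tilde{R}_{\hk m}/\log(\rho)\leq (\text{outer})/\log(\rho)-\tilde{R}_{\hk m}/\log(\rho)$ and note that both endpoints converge to $1$ in $L^2$, so the gap vanishes in $L^2$ and $R_{\hk m}/\log(\rho)\to 1$ in m.s.\ as well. I expect the main obstacle to be the outer bound: although $\max_k\lA\bh_{km,m}\rA_2^2$ genuinely grows like $a\log(\rho)$, I must verify that after the outer logarithm it contributes only $o(\log(\rho))$ and that its fluctuations are tame enough for $L^2$ control, which is precisely where the super-linear scaling $K\propto\rho^a$ is used to keep the $\log\log(\rho)$ correction negligible against $\log(\rho)$.
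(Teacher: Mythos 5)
Your proposal is correct, and its achievability half is essentially the paper's own argument: lower-bound $R_{\hk m}$ by the minimum-interference-power scheme $\tilde{R}_{\hk m}$ of \eqref{2.1.9.1}, invoke Lemma \ref{lemma_mean_square_conv} to kill the denominator term $\rho\sigma_{\hk m}$ in both m.s.\ and a.s.\ senses, and use the independence of $\tbp_{\hk m}$ (a function of the interference channels only) from $\bh_{\hk m,m}$ to conclude that $|\tbp_{\hk m}^*\bh_{\hk m,m}|^2$ is a unit exponential whose law is free of $\rho$ and $K$, so the numerator contributes exactly one degree of freedom per cell. Where you genuinely diverge is the converse. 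The paper disposes of it in one sentence, ``$L$ is the maximum possible number of parallel streams in the $L$-cell SIMO interference channel,'' which quietly ignores that $R_{\hk m}=\max_{k\in\cK}R_{km}$ is a maximum over a number of users growing like $\rho^a$, so one must rule out the possibility that the multiuser-diversity growth of the best desired-channel gain inflates the prelog. Your explicit outer bound $R_{\hk m}\leq\log\bigl(1+\rho\max_{k\in\cK}\|\bh_{km,m}\|_2^2\bigr)$ together with the extreme-value estimate $\max_{k}\|\bh_{km,m}\|_2^2=O(\log K)=O(a\log\rho)$ shows that this effect costs only $O(\log\log\rho)$ after the outer logarithm, hence vanishes upon division by $\log\rho$; this closes the gap the paper leaves implicit, at the price of needing tail/concentration control of the maximum of $K$ i.i.d.\ chi-square variates for the $L^2$ and a.s.\ statements (routine, but worth stating, and it is indeed where the deterministic coupling $K\propto\rho^a$ is used on the converse side). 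In short: same inner bound, a more careful and self-contained converse.
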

\begin{proof}
The inner bound of the instantaneous degrees of freedom of the selected user $\hk m$ (by maximizing SINR)
yields
\beq
\lim_{\rho,K\rightarrow\infty}\frac{R_{\hk m}}{\log(\rho)}
\d4&\geq&\d4 \lim_{\rho,K\rightarrow\infty}\frac{\tilde{R}_{\hk m}}{\log(\rho)} \nonumber \\
\d4&\stackrel{\substack{m.s.\\a.s.}}{=}&\d4 \lim_{\rho\rightarrow\infty}
          \frac{ \log\lp 1\+\rho \la  \frac{\tbp_{\hk m}^*}{\| \tbp_{\hk m} \|_2}\bh_{\hk m, m} \ra^2 \rp }{\log(\rho)} \nonumber \\
\d4&\stackrel{a.s.}{\=}&\d4 1 \label{6.1.7.1}
\eeq
where we use the facts that $\rho\sigma_{\hk m}\stackrel{\substack{m.s. \\ a.s.}}{\longrightarrow} 0$
(i.e., Lemma \ref{lemma_mean_square_conv}) for $\tilde{R}_{\hk m}$ in \eqref{2.1.9.1}
and the quantity $| (\tbp_{\hk m}/\| \tbp_{\hk m} \|_2)^*  \bh_{\hk m, m} |^2$ is independent
of $\rho$ and $K$.
Notice that $\tbp_{\hk m}$ and  $\bh_{\hk m, m}$ are mutually independent and $\tbp_{\hk m}/\| \tbp_{\hk m} \|_2$
is isotropically distributed on the unit sphere.
Thus, $\la (\tbp_{\hk m}/\| \tbp_{\hk m} \|_2)^*\bh_{\hu m, m} \ra^2$ is exponentially distributed and
ensures
$\Pr\lp | (\tbp_{\hk m}/\| \tbp_{\hk m} \|_2) ^*\bh_{\hk m, m} |^2=0 \rp=0$ with probability one.
This fact leads to \eqref{6.1.7.1}.

Summing up the result in \eqref{6.1.7.1} from $m=1$ to $L$ yields the
achievable instantaneous degrees of freedom of $L$.
Recalling that $L$ is the maximum possible number of parallel streams in $L$-cell SIMO interference channel
concludes the proof.
\end{proof}

The result in \eqref{theorem_mean_suare_conv_equation} is strong in the sense that
the mode of convergence falls in the intersection of the two modes (i.e., almost sure (a.s.) convergence and
mean-square (m.s.) convergence).

\subsubsection*{Multiuser Diversity vs. Interference Alignment}
For the $L$-cell SIMO interference channel with $M=1$ and $N\geq 1$, the optimal degrees of freedom
achieved by the interference alignment (without user scheduling) can be formulated as \cite{Gou}
\beq
\bSig_d = \lim_{\rho, n \rightarrow \infty} \sum_{m=1}^L \frac{R_{m,n}(\rho)}{\log(\rho)}
\stackrel{a.s.}{=} \min(L,N) \label{6.8}
\eeq
where $n$ denotes the symbol extension index and $R_{m,n}(\rho)$ denotes the instantaneous rate at
the channel use $n$.
Notice that this characterizes the maximum instantaneous degrees of freedom obtained
by the interference alignment in \cite{Gou} without multiuser diversity.

When $N=L-1$, the optimal instantaneous degrees of freedom in \eqref{6.8} yields
\beq
\bSig_d \stackrel{a.s.}{=}L-1, \nonumber
\eeq
while the multiuser diversity system attains
\beq
\bSig_d \stackrel{\substack{m.s.\\a.s.}}{=}L \nonumber
\eeq
instantaneous degrees of freedom in both of a.s. and m.s. sense.
This strong mode of convergence is benefited by the user scheduling gain.
Notice that the interference alignment is based on the global notion of CSI at all nodes, while the
multiuser diversity system relies only on local CSI with one real number feedback from the receiver
to the transmitter.
The former utilizes infinite symbol extension in time or frequency domain with time-varying channel assumption, while
the later deals with infinite number users in the network with the constant channel coefficients.

Consequently, from Theorem \ref{theorem_mean_suare_conv} and \eqref{6.8}, when $N=L-1$ we make following crucial statement.
\begin{remark}
\emph{Utilizing multiuser diversity with local CSI provides at least additional $\frac{1}{L}$ instantaneous
degrees of freedom to each of the users in the $L$-cell downlink interference channel with $M=1$ and $N=L-1$}.
\end{remark}

\subsection{Average Degrees of Freedom Analysis} \label{section6.2}
The \emph{average} degrees of freedom without the notion of the convergence in random sequences
can now be formulated without difficulty. By taking expectation over all possible channel realizations,
the achievable average rate at user $\hk m$ with the maximum SINR user scheduling is denoted by
\beq
{\bar{R}}_{\hk m} = E\lS R_{\hk m} \rS \label{6.9}
\eeq
where $R_{\hk m}$ is given in \eqref{2.1.6}.
As can be seen from the theorem below, the user scaling law can be relaxed when the average throughput is
considered.

\begin{theorem} \label{theorem_avrage_dof}
If $K$ is linearly proportional to $\rho$ or faster than linear with $\rho$, i.e., $\rho, K\rightarrow \infty$ while
maintaining $K \varpropto \rho^a$ for $a \geq 1$ ($a\in\R$),
the average degrees of freedom of the maximum SINR user scheduler with $M=1$ and $N= L\-1$
is
\beq
\lim_{\rho,K\rightarrow\infty} \frac{ \sum\limits_{m=1}^L  \bar{R}_{\hk m}  }{\log(\rho)} = L. \label{theorem_avrage_dof_equation}
\eeq
\end{theorem}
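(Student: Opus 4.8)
The plan is to sandwich the average sum rate between a matched-filter converse and the minimum-interference-power inner bound $\{\tilde R_{\hk m}\}$ of \eqref{2.1.9.1}, and to show that both, once normalized by $\log\rho$, tend to $L$. Since $\bar R_{\hk m}=E[R_{\hk m}]\geq E[\tilde R_{\hk m}]$ by \eqref{2.1.6.1}, the achievability direction reduces to proving $E[\tilde R_{\hk m}]/\log\rho\to 1$ for each $m$, after which summing over $m=1,\ldots,L$ produces the achievable average DOF of $L$.

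For the inner bound I would normalize $\tbp_{\hk m}$ to unit norm (the SINR is invariant to this scaling) and, writing $g$ for $|\tbp_{\hk m}^*\bh_{\hk m, m}|^2$ and recalling $\rho\sigma_{\hk m}=\rho\,\tbp_{\hk m}^*\bW_{\hk m}\tbp_{\hk m}$, decompose
\[
\tilde R_{\hk m}=\log\!\big(1+\rho\sigma_{\hk m}+\rho g\big)-\log\!\big(1+\rho\sigma_{\hk m}\big).
\]
Using $1+\rho\sigma_{\hk m}+\rho g\leq(1+\rho\sigma_{\hk m})(1+\rho g)$ together with $\sigma_{\hk m}\geq 0$ gives the two-sided estimate
\[
E[\log(1+\rho g)]-E[\log(1+\rho\sigma_{\hk m})]\ \leq\ E[\tilde R_{\hk m}]\ \leq\ E[\log(1+\rho g)].
\]
Because the scheduler \eqref{2.1.9} picks $\hk$ from $\bW_{\hk m}$ alone, the direct channel $\bh_{\hk m, m}$ is independent of $\tbp_{\hk m}$, so $g$ is $\mathrm{Exp}(1)$ and independent of $\rho$ and $K$; a direct integral then gives $E[\log(1+\rho g)]=\log\rho+O(1)$, hence $E[\log(1+\rho g)]/\log\rho\to 1$.

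The crux --- and the point that separates this theorem from its instantaneous counterpart, Theorem \ref{theorem_mean_suare_conv}, which needed $a>1$ --- is bounding $E[\log(1+\rho\sigma_{\hk m})]$ at the boundary $a=1$, where $\rho\sigma_{\hk m}$ no longer vanishes. From the exact CCDF \eqref{2.1.11}, $\rho\sigma_{\hk m}$ is exponential with rate $\frac{(L-1)K}{\rho}=(L-1)c\rho^{a-1}$ when $K=c\rho^a$. For $a>1$ this rate diverges and the term tends to $0$; for $a=1$ it is a fixed constant, so $E[\log(1+\rho\sigma_{\hk m})]$ converges to the finite number $E[\log(1+Z)]$ with $Z\sim\mathrm{Exp}((L-1)c)$. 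Either way $E[\log(1+\rho\sigma_{\hk m})]=O(1)=o(\log\rho)$, so the sandwich forces $E[\tilde R_{\hk m}]/\log\rho\to 1$.

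For the converse I would use the matched-filter bound $R_{\hk m}=\max_{k}R_{km}\leq\max_{k}\log(1+\rho\|\bh_{km,m}\|_2^2)$; Jensen's inequality and the fact that the expected maximum of $K$ i.i.d. chi-square variates grows like $\log K$ yield $E[R_{\hk m}]\leq\log(1+\rho\cdot O(\log K))=\log\rho+O(\log\log\rho)$, which is $\log\rho+o(\log\rho)$ since $K\varpropto\rho^a$. Thus each $\bar R_{\hk m}/\log\rho\to 1$ from above as well, confirming that the scheduling (multiuser-diversity) gain is only a negligible $O(\log\log\rho)$ correction that does not inflate the degrees of freedom. I expect the main obstacle to be exactly the $a=1$ estimate of $E[\log(1+\rho\sigma_{\hk m})]$: here one cannot merely push the almost-sure and mean-square limits of Lemma \ref{lemma_mean_square_conv} through the expectation, because the interference stays $\Theta(1)$; one must instead exploit its exact exponential law (and a uniform-integrability-type control of the logarithm) to certify that its contribution is $o(\log\rho)$.
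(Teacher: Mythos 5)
Your proof is correct and follows the same sandwich strategy as the paper --- lower-bounding $\bar{R}_{\hk m}$ by the minimum-interference-power scheme's rate $E[\tilde{R}_{\hk m}]$ and upper-bounding it by an interference-free rate --- but the two key estimates are executed differently. For the inner bound the paper never touches the distribution of $\log(1+\rho\sigma_{\hk m})$: it lower-bounds $\tilde{R}_{\hk m}$ by a ratio whose denominator is the noise-plus-interference power, applies Jensen's inequality to pull the expectation inside the logarithm of that denominator, and then substitutes only the exact mean $E[\rho\sigma_{\hk m}]=\rho/((L-1)K)=O(\rho^{1-a})$, which is $O(1)$ precisely when $a\geq 1$. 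Your route instead splits $\tilde{R}_{\hk m}=\log(1+\rho\sigma_{\hk m}+\rho g)-\log(1+\rho\sigma_{\hk m})$ and controls $E[\log(1+\rho\sigma_{\hk m})]$ directly from the exponential law \eqref{2.1.11}; this requires the full distribution rather than just the first moment, but it is equally valid and makes the role of the boundary case $a=1$ more transparent. On the converse your argument is actually more complete than the paper's: the paper simply drops the interference term and asserts that $E\bigl[\log\bigl(1+\rho\,\bigl|\bp_{\hk m}^*\bh_{\hk m,m}\bigr|^2/\lA\bp_{\hk m}\rA_2^2\bigr)\bigr]/\log\rho\rightarrow 1$, even though under max-SINR scheduling the selected direct-channel gain is biased upward by the selection over $K\varpropto\rho^a$ users; your matched-filter bound combined with the $O(\log K)$ growth of the expected maximum of $K$ i.i.d.\ chi-square variates supplies exactly the missing justification that this selection gain contributes only $O(\log\log\rho)=o(\log\rho)$.
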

\begin{proof}
The quantity in \eqref{6.9} is lower bounded by
\beq
\!\!\!\!\!\!\!\!\bar{R}_{\hk m} \!\d4&\geq&\!\d4 E\lS \tR_{\hk m} \rS \nonumber \\
\!\d4&\geq &\!\d4 E\!\lS \log\!\lp\!  \frac{\lA\tbp_{\hk m}\rA_2^2 \+ \rho | \tbp_{\hk m}^*\bh_{\hk m, m} |^2}
                                           {E\!\lS \lA\tbp_{\hk m}\rA_2^2 \rS \+ E\lS \rho \sigma_{\hk m}  \rS}\rp \rS \label{6.10}
\eeq
where in the second step we use $\rho \sigma_{\hk m} \geq 0$ and Jansen's inequality.

Plugging the result in \eqref{2.1.13} in \eqref{6.10} yields
\beq
\bar{R}_{\hk m} \!\geq\!
E\!\lS \log\!\lp\!  \frac{\lA\tbp_{\hk m}\rA_2^2 \+ \rho | \tbp_{\hk m}^*\tbh_{\hk m, m} |^2}
                                           {E\!\lS \lA\tbp_{\hk m}\rA_2^2 \rS \+ \frac{\rho}{(L\-1)K } }\rp \rS \label{6.11}
\eeq
Then, as $\rho, K$ tends to infinity, the average degrees of freedom of the r.h.s. of \eqref{6.11} converges to
\beq
1 - \lim_{\rho,K\rightarrow\infty}\frac{\log\lp E\lS \lA\tbp_{\hk m}\rA_2^2 \rS + \frac{\rho^{1-a}}{(L-1) } \rp}{\log(\rho)}=1 \nonumber
\eeq
as long as $a\geq 1$.

On the other hand, the outer bound of $\bar{R}_{\hk m}$ is obtained by ignoring interference term in \eqref{2.1.6}, i.e.,
\beq
\lim_{\rho\rightarrow \infty} E\lS \log\lp 1 + \rho \la \frac{\bp_{\hk m}^*}{\lA\bp_{\hk m}\rA_2}\bh_{\hk m, m} \ra^2 \rp / \log(\rho) \rS = 1. \nonumber
\eeq

Thus, $\lim\limits_{\rho,K\rightarrow\infty}\!\frac{\bar{R}_{\hk m}}{\log(\rho)}\=1$
and subsequently, $\lim\limits_{\rho,K\rightarrow\infty}\!\frac{\sum\limits_{m=1}^L \!\bar{R}_{\hk m}}{\log(\rho)}\=L$.
\end{proof}

Theorem \ref{theorem_avrage_dof} states that in order to achieve the average degrees of freedom
of $L$ for the $L$ selected users,
it is sufficient to increase $K$ like $K\propto\rho$ as $\rho\rightarrow \infty$.
We observe the user scaling law is relaxed compared to the case in Theorem \ref{theorem_mean_suare_conv}
so that it allows the linear increase.
However, the convergence in \eqref{theorem_avrage_dof_equation} does not include modes of the
convergence in random sequences, thereby,
the argument is quiet much weaker than \eqref{theorem_mean_suare_conv_equation}.
Theorem \ref{theorem_mean_suare_conv} implies Theorem \ref{theorem_avrage_dof}, while
Theorem \ref{theorem_avrage_dof} does not guarantee Theorem \ref{theorem_mean_suare_conv}.

\section{Conclusions} \label{section_conlusions}

We characterized the degrees of freedom for the multicell MIMO MAC consisting of
$L$ cells and $K$ users per cell with constant channel coefficients.
We presented a degrees of freedom outer bound and linear achievable schemes for a few cases
that obtain the optimal degrees of freedom.
The degrees of freedom outer bound showed that for virtual MIMO systems selecting transmitters with
partial message sharing (through perfect link) sometimes provided more degrees of freedom than employing
multiple distributed MIMO transmitters.
The characterized outer bound also provides insight into the degrees of freedom limit for the two-tier
heterogeneous network  where the network is composed of $(L-1)$ lower-tier cells each with single user
and one macrocell with $K$ users.
By simply characterizing the linear inner bound schemes, it was shown that the transmit zero
forcing and null space interference alignment achieve the optimal degrees of freedom for the two-cell case for arbitrary number of users.
We also verified that receive zero forcing achieves the optimal degrees of freedom
for $L>1$ and $K\geq 1$ without transmit and receive coordination.
The generalized null space interference alignment scheme was developed for various spatial dimension conditions
to provide $\beta$ interference free dimensions to each of users.
We also verified that the developed linear schemes indeed achieve the optimal degrees of freedom using the
minimum possible $M+N$ when assuming a single stream per user.
Exploiting multiuser diversity, we showed that the instantaneous degrees of freedom converges to $L$ in both
almost sure (a.s.) and mean-square (m.s.) sense for $L$-cell SIMO downlink interference channel with
$M=1$ and $N=L-1$. This exhibited clear comparison on the instantaneous degrees of freedom between
the multiuser diversity system and conventional interference alignment.

\appendices
\section{Lemma \ref{lemma_linear_independent_columns}} \label{appendix_linear_independent_columns}
\begin{lemma} \label{lemma_linear_independent_columns}
Given $\bA\in\C^{m\times n}$ and $\bB\in\C^{n \times l}$ with $n\geq\max(m,l)$ where $\bA$ and $\bB$ with i.i.d. are full rank
and are mutually independent,
$\bA\bB$ has $\rank(\bA\bB)\=\min(m,l)$ with probability one.
\end{lemma}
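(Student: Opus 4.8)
The plan is to combine a deterministic rank bound with a standard algebraic-genericity (measure-zero) argument. First I would record the elementary inequality $\rank(\bA\bB)\leq\min(\rank(\bA),\rank(\bB))=\min(m,l)$, which holds for \emph{every} realization: the hypothesis $n\geq\max(m,l)$ together with full rank forces $\rank(\bA)=m$ and $\rank(\bB)=l$. Thus the upper bound is automatic, and only the reverse (lower) bound requires a probabilistic argument. It therefore suffices to show $\Pr\big(\rank(\bA\bB)<\min(m,l)\big)=0$.

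Next I would phrase the rank-deficiency event algebraically. Writing $r=\min(m,l)$, the entries of $\bA\bB$ are bilinear polynomials in the joint $mn+nl$ complex entries of $(\bA,\bB)$, and $\rank(\bA\bB)<r$ holds if and only if every $r\times r$ minor of $\bA\bB$ vanishes. Each such minor is a polynomial $p(\bA,\bB)$ in those entries, so the deficiency event is contained in the common zero set of these polynomials, and in particular in the zero set $\{p=0\}$ of any single chosen minor $p$.

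The crux is to exhibit one realization at which some $r\times r$ minor is nonzero, certifying that the corresponding $p$ is not identically zero. I would take $\bA_0=[\bI_m\ \bzero]$ and $\bB_0=[\bI_l\ \bzero]^T$, with the $\bzero$ blocks padding to the size-$n$ dimension; both are full rank since $n\geq\max(m,l)$, and $\bA_0\bB_0$ equals the top $m$ rows of $\bB_0$, whose rank is exactly $r$ in both the case $m\leq l$ and the case $m>l$. Hence at least one $r\times r$ minor is nonzero at $(\bA_0,\bB_0)$, so the chosen $p\not\equiv 0$. Since the zero set of a nonzero polynomial in finitely many complex variables has Lebesgue measure zero, and since $\bA,\bB$ are independent with entries from continuous (absolutely continuous) distributions—making the joint law of $(\bA,\bB)$ absolutely continuous with respect to Lebesgue measure—the set $\{p=0\}$ is assigned probability zero. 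As the deficiency event sits inside $\{p=0\}$, it too has probability zero, giving $\rank(\bA\bB)=r$ almost surely.

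I expect the main obstacle to be precisely this witness step: producing explicit full-rank $\bA_0,\bB_0$ whose product attains the full rank $r$, so as to guarantee that the relevant minor polynomial is not identically zero. Everything else—the deterministic inequality, the minor characterization of rank, and the principle that a nontrivial polynomial vanishes only on a null set—is routine once the witness is in place.
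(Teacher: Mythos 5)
Your proof is correct, and it takes a genuinely different route from the paper's. The paper works directly with a determinant: assuming $\min(m,l)=m$, it extracts the first $m$ columns $\hbB$ of $\bB$, splits $\bA=[\bbA\ \tbA]$ and $\hbB^*=[\bbB\ \tbB]$ into $m\times m$ leading blocks plus remainders, factors $\det(\bA\hbB)=\det(\bbA\bbB^*)\det\lp\bI_m+(\bbA\bbB^*)^{-1}\tbA\tbB^*\rp$, and bounds the probability that this vanishes by a union bound over the two factors. Your argument instead proves the deterministic upper bound $\rank(\bA\bB)\leq\min(m,l)$ and then handles the lower bound by the standard algebraic-genericity principle: rank deficiency forces a fixed $r\times r$ minor --- a polynomial in the entries of $(\bA,\bB)$ --- to vanish, the witness $\bA_0=[\bI_m\ \bzero]$, $\bB_0=[\bI_l\ \bzero]^T$ shows that polynomial is not identically zero, and a nonzero polynomial vanishes on a Lebesgue-null set, which has probability zero under the absolutely continuous joint law. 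Your route is arguably cleaner and more robust: it treats both cases $m\leq l$ and $m>l$ uniformly, and it avoids the paper's somewhat under-justified step where the almost-sure nonvanishing of $\det\lp\bI_m+(\bbA\bbB^*)^{-1}\tbA\tbB^*\rp$ is essentially asserted rather than derived. What the paper's approach buys is an explicit, elementary computation that does not invoke the measure-zero property of polynomial zero sets. One caveat applying to both proofs: your argument needs the joint law of $(\bA,\bB)$ to be absolutely continuous with respect to Lebesgue measure (you note this), whereas in the paper's applications $\bB$ is often a null-space basis of another channel matrix, so its entries are continuous but not i.i.d.; in that setting one should condition on $\bB$ and run the polynomial argument in the entries of $\bA$ alone, which your framework accommodates with only a one-line modification.
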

\begin{proof}
First, we assume $\min(m,l)\=m$ and decompose $\bB\= \lS \hbB \ \bB^{'} \rS$ where
$\hbB\in\C^{n \times m}$ is formed by taking the first $m$ columns of $\bB$ and
$\bB^{'}\in\C^{n \times (l\-m)}$ is composed of columns from $m\+1$ to $l$ columns
of $\bB$. Then, regarding $\rank(\bA\bB)$ we have
\beq
\rank\lp \bA\hbB \rp \leq \rank\lp \bA\bB\=[ \bA\hbB \ \bA\bB^{'}] \rp \leq \min(m,l)\=m \label{ap1.0}.
\eeq
Note that when $\min(m,l)\=l$, we only need to consider the matrix $\bB^*\bA^*$,
and it is handled similarly to the case $\min(m,l)\=m$. Thus, we omit the case $\min(m,l)\=l$ and
focus on $\min(m,l)\=m$.

We further decompose $\bA \= \lS \bbA \ \tbA \rS$ and
$\hbB^* = \lS \bbB \ \tbB \rS$ where $\bbA\in\C^{m\times m}$ and
$\bbB\in\C^{m\times m}$ are formed by taking the first $m$ columns
of $\bA$ and $\hbB^*$, respectively, and $\tbA\in\C^{m\times (n\-m)}$
and $\tbB\in\C^{m\times (n\-m)}$ are submatrices corresponding to columns
from $m\+1$ to $n$ of $\bA$ and $\hbB^*$, respectively.

We claim $\Pr\big( \big| \det\big(\bA\hbB\big) \big| >0  \big)\=1$.
The claim is verified by providing the converse, i.e., $\Pr\big( \det\big( \bA\hbB \big)  \= 0 \big)\=0$.
Since $\bA$ and $\hbB$ are drawn from i.i.d. continuous distributions, their principal submatrices $\bbA$ and $\bbB^*$
(square matrices) are full rank matrices ($\rank(\bbA)\=m$ and $\rank(\bbB^*)\=m$) almost surely.
Now, we have
\beq
\Pr \lp  \det\lp \bA\hbB \rp  \= 0 \rp
\d4&=&\d4 \Pr \lp  \det\lp \bbA\bbB^* + \tbA\tbB^* \rp  \= 0\rp  \nonumber \\
\d4&=&\d4 \Pr \lp  \det\lp \bbA\bbB^* \rp    \det\lp \bI_{m}
     \+ \lp \bbA\bbB^* \rp^{-1} \tbA\tbB^*  \rp  \=0 \rp  \nonumber \\
\d4&=&\d4 \Pr \lp  \ls \det\lp \bbA\bbB^* \rp\=0 \rs \cup \ls \det\lp \bI_{m}
     \+ \lp \bbA\bbB^* \rp^{-1} \tbA\tbB^* \rp\=0 \rs \rp  \label{ap1.1}.
\eeq
By using the fact that both $\bbA\bbB^*$ and
$\bI_{m} \+ \lp \bbA\bbB^* \rp^{-1} \tbA\tbB^*$ are invertible $m \times m$ matrices,
from \eqref{ap1.1} we obtain
\beq
\Pr \lp  \det\lp \bA\hbB \rp  \= 0 \rp \leq \Pr \lp  \det\lp \bbA\bbB^* \rp\=0 \rp
     \+ \Pr \lp  \det\lp \bI_{m} \+ \lp \bbA\bbB^* \rp^{-1} \tbA\tbB^* \rp\=0 \rp\=0 \nonumber
\eeq
Consequently, we get $\Pr\big( \det\big( \bA\hbB \big)  \= 0 \big)\=0$ implying that
the left hand side (l.h.s.) of \eqref{ap1.0} is $\rank\big(\bA\hbB\big)\=m$. This concludes the proof.
\end{proof}

\section{Proof of Theorem \ref{theorem_RX_ZF_optimal_dof}} \label{appendix_theorem_RX_ZF_optimal_dof}
The converse is checked by plugging $M=\beta$ and $N\=LK\beta$ in \eqref{corollary_homogenous_outerbound_equation},
which in turn yields
\beq
\bSig_d \d4&\leq&\d4 \min\lp KL\beta, KL^2\beta, \frac{(KL)^2(L-1)\beta}{K+L-1}, \frac{(KL)^2\beta}{K+L-1} \rp \nonumber \\
\d4&\leq&\d4 \min\lp KL\beta,  \frac{KL}{K+L-1} KL\beta\rp = KL\beta. \nonumber
\eeq
The last equality follows from the fact that $KL\geq K+L-1$ for $K,L\geq 1$.

Inner bound is argued by using receive zero forcing.
When $N=LK\beta$ and $M=\beta$, base station $m$ chooses a null space plane $\bP_m\in\C^{K\beta\times LK\beta}$ such that
\beq
\span\lp \bP_m^T \rp \subset null\lp \lS \bH^{[m,1\cK]} \cdots \bH^{[m,(m-1)\cK]} \ \bH^{[m,(m+1)\cK]} \cdots \bH^{[m,L\cK]}\rS^T \rp \label{4.1}
\eeq
where $\bH^{[m,l\cK]}=\lS \bH_{m,l1} \cdots \bH_{m,lK} \rS\in\C^{LK\beta\times K\beta}$.
Since $\big[ \bH^{[m,1\cK]} \cdots \bH^{[m,(m-1)\cK]} \ \bH^{[m,(m+1)\cK]} \cdots \bH^{[m,L\cK]} \big]^T\in\C^{(L\-1)K\beta\times LK\beta}$,
$\bP_m$ that satisfies \eqref{4.1} with $\rank(\bP_m^T)=K\beta$ can be found with probability one.
Postprocessing $\by_m$ in \eqref{4.0} with $\bP_m$ returns
\beq
\tby_m\=\sum_{k=1}^K \bP_m\bH_{m,mk}\bT_{mk}\bs_{mk}\+\bP_m\bz_m = \bP_m\bG_m\tbs_m + \tbz_m.\nonumber
\eeq
where $\bG_m=[\bH_{m,m1}\bT_{m1} \cdots \bH_{m,mK}\bT_{mK}]\in\C^{LK\beta\times K\beta}$,
$\tbz_m=\bP_m\bz_m$, and $\tbs_m=\lS \bs_{m1}^T \cdots \bs_{mK}^T \rS^T$.
Here, $\bT_{mk}\in\C^{\beta\times\beta}$ can be arbitrary with $\rank\lp\bT_{m}\rp\=\beta$.
Without loss of generality, $\bT_{mk}$ can be taken to be $\bT_{mk}=\bI_{\beta}$.
As observed in the proof of Theorem \ref{theorem_Two_cell_MAC_optimal_dof},
$\bP_m$ and $\bG_m$ are mutually independent and $\bP_m\bG_m$ spans a $K\beta$-dimensional space with probability one.
This ensures the achievability of $LK\beta$ degrees of freedom for $L$-cell and $K$-user MIMO MAC.

\section{Proof of Theorem \ref{theorem_geometric_multiplicity}} \label{appendix_theorem_geometric_multiplicity}
Assume $\ls \bA_{1 }, \ldots, \bA_{K} \rs$ has $\gam$ null space multiplicity.
Since the matrices $\ls \lS \bA_{k} \ \bB_k \rS \rs_{k\in\cK}$ are nondegenerate, the $\gam$ and $\mu$ do not depend on the
choice of $\gam$-tuple matrix set.
Thus,  without loss of generality, we assume a $\gam$-tuple combination
$\{\bA_{i }\}_{i=1}^{\gam}$.
Set $\bGam_1=\bB_{1 }$.
Then, it is clear that $\bA_{1 }^*\bGam_1\=\bzero$.
Let $\bZ_{2}\in\C^{(n-m)\times(n-2m)}$ be an orthonormal basis of $null(\bA_{2}^*\bGam_1)$ and
denote $\bGam_2=\bGam_1\bZ_2$.
Since $\bA_{1}^*\bGam_2=\bzero$ and $\bA_{2}^*\bGam_2=\bzero$, $\bGam_2$ is in
$null(\bA_{1}^*)\cap null(\bA_{2}^*)$.
In the same manner, $\bGam_{i}$ for $i>2$ is designed with the
recursion
\beq
\bGam_{i}=\bGam_{i\-1}\bZ_{i} \label{ap3.2}
\eeq
where $\bZ_{i}$ is an orthonormal basis of $null(\bA_{i}^*\bGam_{i\-1})$.
Then, after $\gam$ times of recursions, we have
$\bGam_{\gam}=\bGam_{\gam\-1}\bZ_{\gam}\in \C^{n\times(n-\gam m)}$,
and since $\bA_{\gam\-1}^*\bGam_{\gam}\=\bzero$ and $\bA_{\gam }^*\bGam_{\gam}\=\bzero$,
we have
\beq
\bGam_{\gam}\subset\bigcap\limits_{i=1}^{\gam} null(\bA_{i })\label{ap3.1}.
\eeq
The existence of $\bGam_{\gam}$ in \eqref{ap3.1} (i.e., the existence of $\bZ_{\gam}$) is
therefore ensured if $n-\gam m \geq 1$, i.e.,
$\gam\leq \frac{n-1}{m}$
which is equivalent to
\beq
\gam=\left\lfloor \frac{n-1}{m} \right\rfloor=\left\lceil\frac{n-m}{m}\right\rceil.
\eeq
Notice that the result does not depend on the choice of $\gam$-tuple matrix set.
Since $\gam$ can not exceed $K$, $\gam$ is characterized as $\gam=\min\lp \lceil\frac{n-m}{m}\rceil, K \rp$.
Note that $\gam$ is the maximum possible integer such that $n-\gam m \geq 1$ implying
$\mu=\rank(\bGam_{\gam})$ is given by
\beq
\mu=n-\gam m
\eeq
and $1\leq \mu \leq m$.
This concludes the proof.

\section{Extension to $L>2$ Case} \label{appendix_NIA_L_geq_2}
When $L>2$, there are total $(L-1)K\beta$ out-of-cell interference streams.
We need to align $(L-1)K\beta$ interference streams to the lower dimensional subspace than $K\beta$-dimensional subspace
to provide $\beta$ interference free dimensions for each of users.
Since the dimension of the out-of-cell interference streams is larger than the dimension available at the reciever
(i.e., $K\beta < (L-1)K\beta$), direct extension of the framework for $L=2$ case seems not to work.
To solve this problem, we consider to aggregate out-of-cell interference channels.

Given $\ls \bH_{m,\ell k} \rs_{\ell\in\cL\bsh m, k\in\cK}$, channel aggregation is performed by collecting
$(L-1)$ out-of-cell interference channels such that
\beq
\tbH_{m,\barm k} = \lS \bH_{m,1k} \ \cdots \bH_{m,(m-1)k} \ \bH_{m,(m+1)k} \cdots \bH_{m,Lk} \rS \nonumber
\eeq
where $\tbH_{m,\barm k}\in\C^{N\times (L-1)M}$.
This aggregation results in total $K$ aggregated out-of-cell interference channels $\ls \tbH_{m,\barm k} \rs_{k\in\cK}$.
Then, the geometric multiplicity $\gam$ of $\ls \tbH_{m,\barm k} \rs_{k\in\cK}$ is expressed as
\beq
\gam=\min\Big(  \left\lceil \frac{N-(L-1)M}{(L-1)M} \right\rceil,K  \Big). \label{5.10.1}
\eeq
In \eqref{5.10.1}, we make the assumption that $N>(L-1)M$ (i.e., $1\leq \gam\leq K$).

Now consider full rank matrices $\ls \lS \tbH_{m,\barm k} \ \tbN_{m,\barm k} \rS \rs_{k\in\cK}$ where
$\tbN_{m,\barm k}\in \C^{N\times (N-(L-1)M)}$.
Under the same definition for the index set $\Pi_k=\ls \pi_i \rs_{i=k}^{\gam+k-1}$ as in \eqref{5.4.1},
the intersection null space is denoted by $\tbN_{m,\barm}^{(k)}\in\C^{N\times \mu}$, i.e.,
\beq
\span\Big( \tbN_{m,\barm}^{(k)} \Big) \subset \bigcap\limits_{i=k}^{\gam+k-1} ran\Big( \tbN_{m,\pi_i} \Big).
\eeq
Then, following the same framework for designing $\bP_m$ as $L=2$ case,
when $1\leq \gam\leq K-1$, $\bP_m$ is formed by
\beq
\bP_m \= \bPi \lS \tbN_{m,\barm}^{(1)} \  \tbN_{m,\barm}^{(2)} \ \cdots \tbN_{m,\barm}^{(K)} \rS^* \label{5.11}
\eeq
with $N=(L-1)\gam M + \beta$.
When $\gam=K$, we have $\tbN_{m,\barm}^{(1)}\in\C^{N\times K\beta}$ and
\beq
\bP_m=\bPi \tbN_{m,\barm}^{(1)^{\scriptstyle *}} \label{5.11.1}
\eeq
which is possible if $N=(L-1)\gam M + K\beta$.
Now, given $\bP_m$ in \eqref{5.11} and \eqref{5.11.1}, the projected out-of-cell interference channel $\bP_m \bH_{m,\barm k}$ satisfies
$\rank ( \bP_m\bH_{m,\barm k} )=(K-\gam)\beta$ for $k\in\cK$, $\barm\in\cL\bsh m$.
Now, under the zero out-of-cell interference constraint $\span (\bW_{\barm k})\subset null( \bP_m \bH_{m,\barm k} )$, we
must have
\beq
M=(L-1)(K-\gam)\beta + \beta. \label{5.12}
\eeq

\bibliographystyle{IEEEtran}
\bibliography{IEEEabrv,Journal_IA_null}

\newpage

\begin{figure}[!htb]
\centering
\includegraphics[width=9cm, height=9cm]{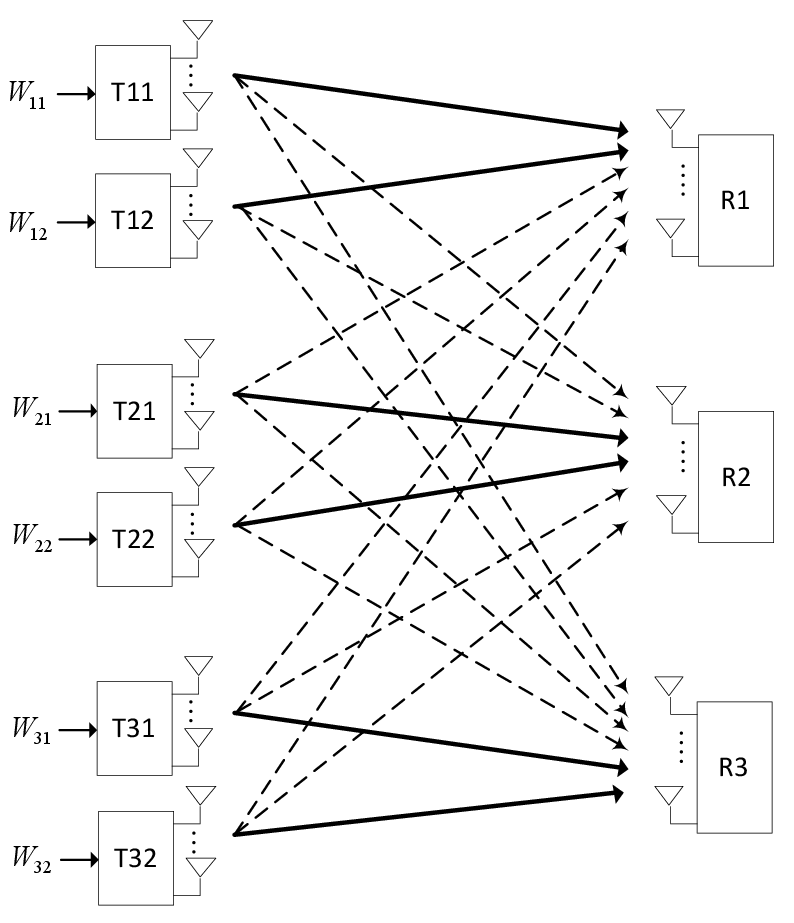}
\caption{Multicell MIMO MAC with $L=3$ and $K=2$.}
\label{Fig1}
\end{figure}

\begin{figure}[htb]
    \centering
    \includegraphics[width=10cm, height=9.5cm]{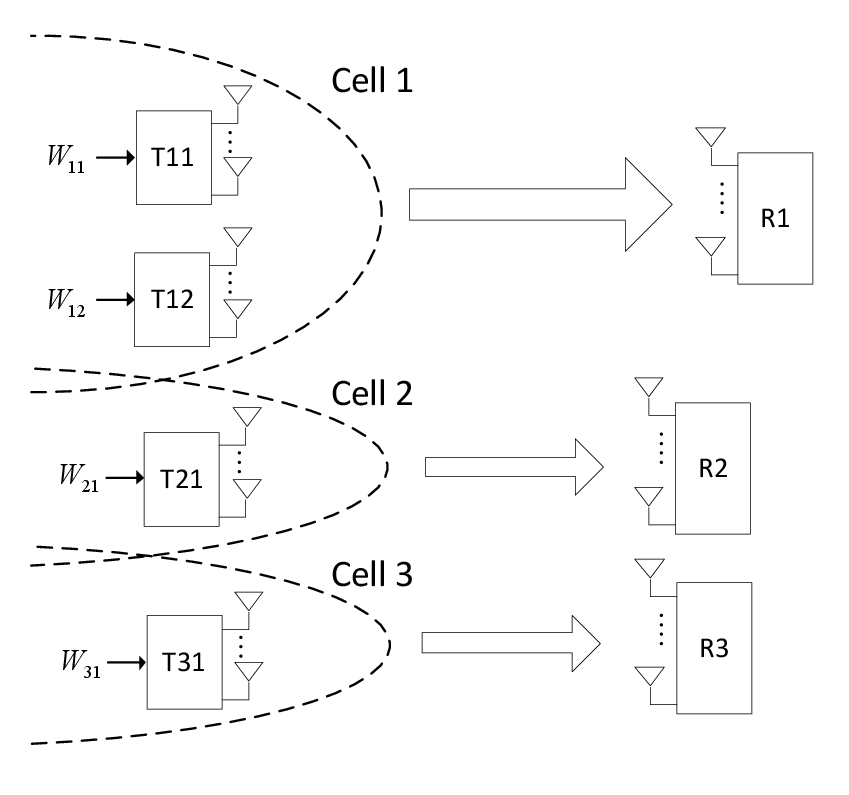}
    \caption{$(1,2)$ MAC-IC uplink HetNet consisting of a $2$-user MIMO MAC (i.e., cell $1$) and
             $2$-user MIMO interference channel (i.e., cell $2$ and $3$).}
    \label{Fig6}
\end{figure}

\begin{figure}[htb]
    \centering
    \includegraphics[width=10cm, height=10cm]{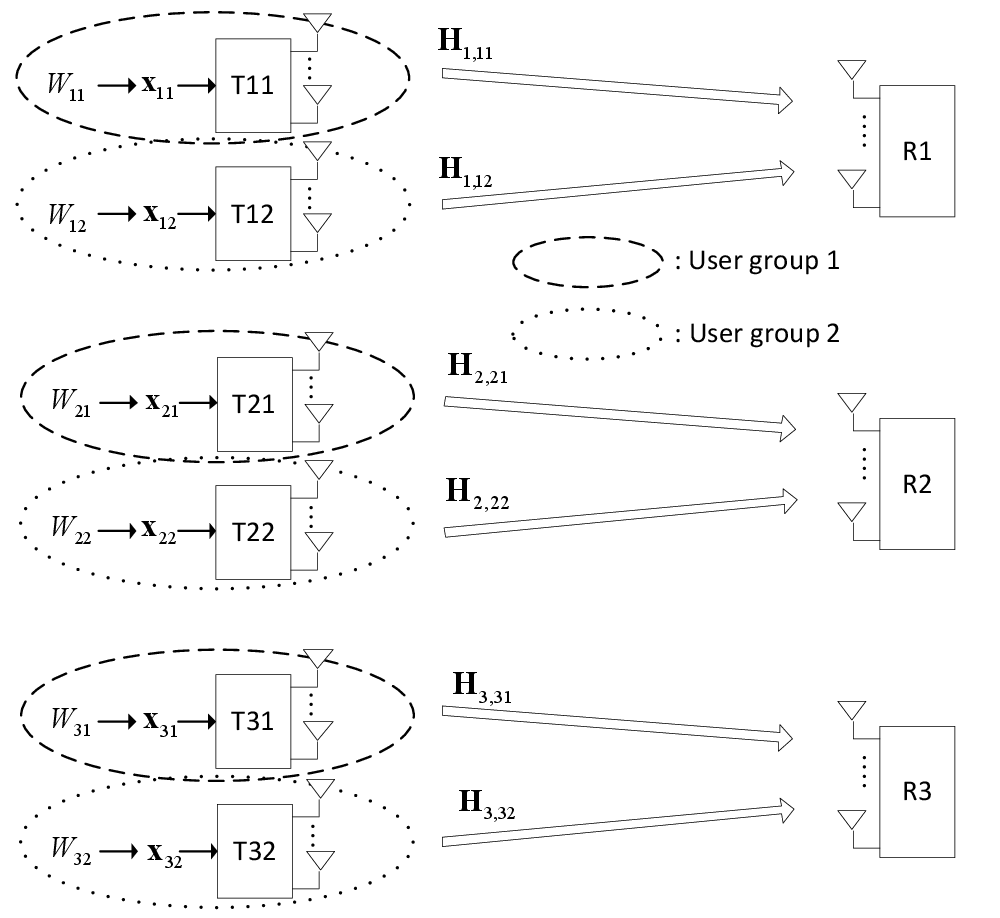}
    \caption{User grouping strategy for $L=3$ and $K=2$ MIMO MAC}
    \label{Fig3a}
\end{figure}

\begin{figure}[htb]
    \centering
    \includegraphics[width=10cm, height=10cm]{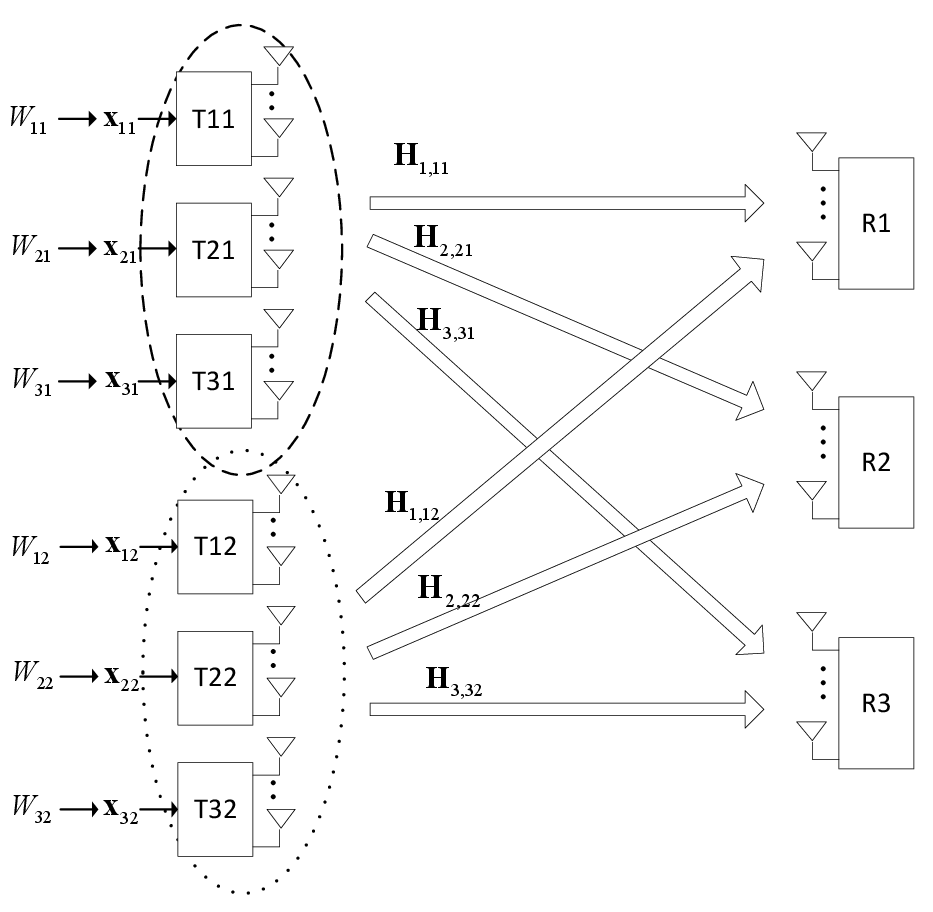}
    \caption{Conversion to distributed $2 \times 3$ homogeneous MIMO X channel}
    \label{Fig3b}
\end{figure}


\begin{figure}[htb]
    \subfigure[]{
    \epsfig{figure=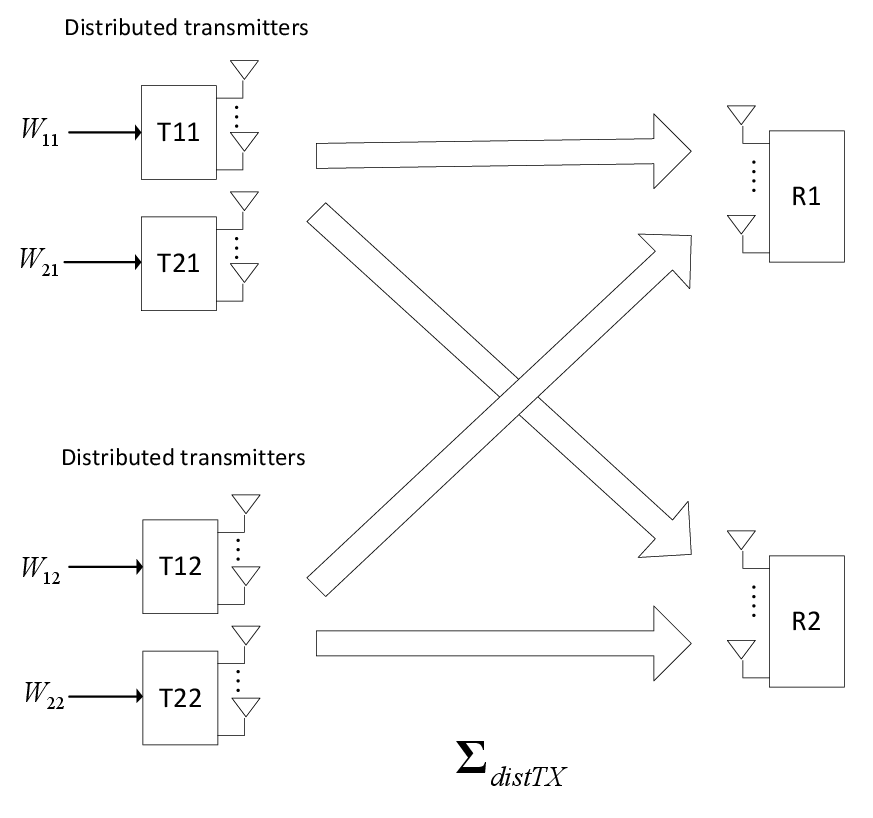, width=0.45\textwidth, height=8.5cm}}
    \subfigure[]{
    \epsfig{figure=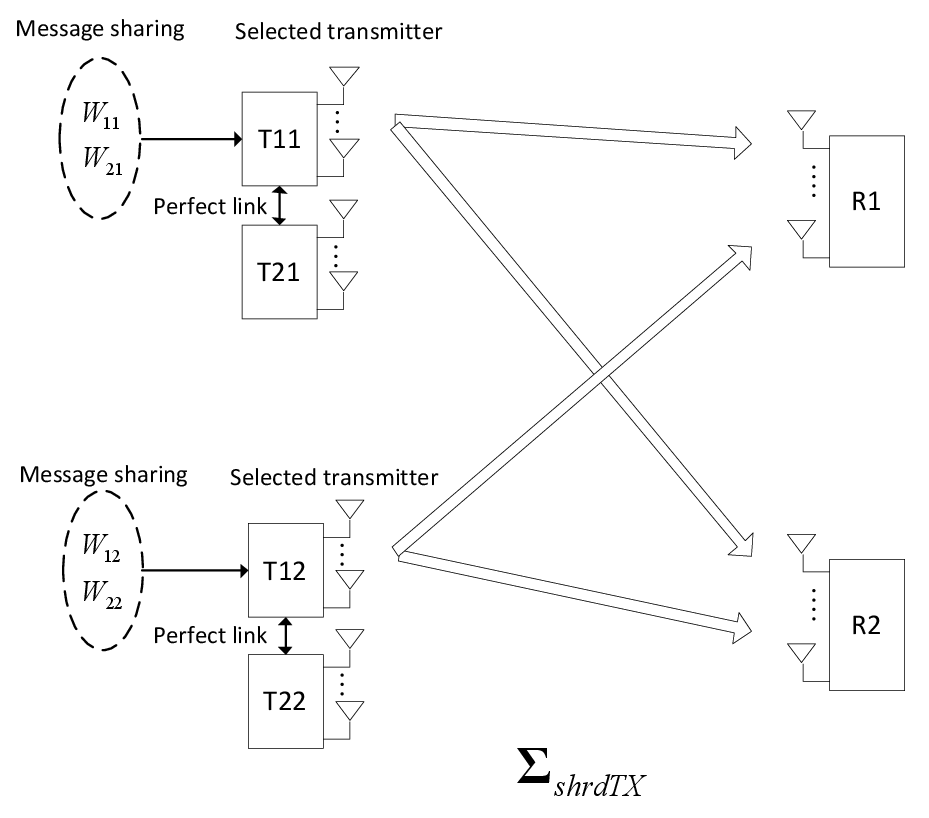, width=0.48\textwidth, height=8.5cm}}
    \caption{(a) Multiple distributed transmission ($L=2$ and $K=2$). (b) Selected and shared transmission through perfect links ($L=2$ and $K=2$)}
    \label{Fig5}
\end{figure}


\end{document}